\newtheorem{theorem}{Theorem}
\newtheorem{lemma}[theorem]{Lemma}
\newtheorem{corollary}[theorem]{Corollary}
\newtheorem{definition}{Definition}[section]
\def\BibTeX{{\rm B\kern-.05em{\sc i\kern-.025em b}\kern-.08em
    T\kern-.1667em\lower.7ex\hbox{E}\kern-.125emX}}
\begin{document}
\title{Decentralization in PoS Blockchain Consensus: Quantification and Advancement}

\author{%
    \IEEEauthorblockN{Shashank Motepalli and Hans-Arno Jacobsen}\\
    \IEEEauthorblockA{Department of Electrical and Computer Engineering \\ University of Toronto, Canada }
}


\maketitle

\begin{abstract}
Decentralization is a foundational principle of permissionless blockchains, with consensus mechanisms serving a critical role in its realization. This study quantifies the decentralization of consensus mechanisms in proof-of-stake (PoS) blockchains using a comprehensive set of metrics, including Nakamoto coefficients, Gini, Herfindahl-Hirschman Index (HHI), Shapley values, and Zipf’s coefficient. Our empirical analysis across ten prominent blockchains reveals significant concentration of stake among a few validators, posing challenges to fair consensus. To address this, we introduce two alternative weighting models for PoS consensus: Square Root Stake Weight (SRSW) and Logarithmic Stake Weight (LSW), which adjust validator influence through non-linear transformations. Results demonstrate that SRSW and LSW models improve decentralization metrics by an average of 51\% and 132\%, respectively, supporting more equitable and resilient blockchain systems.
\end{abstract}

\begin{IEEEkeywords}
Blockchains, Decentralized applications, Decentralized applications
\end{IEEEkeywords}

\section{Introduction}
Bitcoin shaped the field of blockchains by introducing a peer-to-peer system that operates without trusted intermediaries~\cite{nakamoto2008bitcoin}. Its vision encapsulates \textit{decentralization}, characterized by the elimination of single points of failure. Our work focuses on exploring decentralization in blockchains, particularly in their consensus mechanisms. These mechanisms are critical as they establish agreement on the content and order of transactions among validators.

The problem this paper addresses is the \textit{analysis and advancement of decentralization in consensus mechanisms}. This problem is particularly interesting because, although decentralization is fundamental to every blockchain, standardized metrics to quantify it within consensus are lacking. This gap, coupled with the technical complexity of consensus algorithms, complicates the analysis of blockchain systems in practice. Moreover, enhancing decentralization in consensus mechanisms is of significance, as it directly contributes to the trust and safety in blockchain systems.

To address the challenge of decentralization in proof-of-stake (PoS) blockchains, we analyze how a validator's stake influences their voting power in weighted consensus mechanisms. Drawing inspiration from quantitative decentralization metrics originally developed for decentralized autonomous organizations (DAOs) with token-based voting~\cite{austgen2023dao,sharma2023unpacking,tan2023open}, we extend these frameworks to measure decentralization within consensus mechanisms. Specifically, we adapt metrics such as the Gini index and Nakamoto coefficients for both safety and liveness to assess validator influence~\cite{motepalli2024does}. Our empirical analysis spans ten major blockchains: Aptos~\cite{aptos}, Axelar~\cite{axelar}, BNB~\cite{bnb}, Celestia~\cite{celestia}, Celo~\cite{celo}, Cosmos~\cite{cosmos}, Injective~\cite{injective}, Osmosis~\cite{osmosis}, Polygon~\cite{polygon}, and Sui~\cite{sui}. Findings reveal a concentration of voting power among a small subset of validators, raising critical concerns for blockchain security and integrity. To address this, we propose the Square Root Stake Weight (SRSW) function, demonstrating its effectiveness through extensive evaluation~\cite{motepalli2024does}.

This paper extends our \textcolor{black}{previous work~\cite{motepalli2024does}} across four main dimensions. First, we broaden the analysis of decentralization by adapting four new interdisciplinary metrics, including the Shapley value from cooperative game theory to capture coalition influence, along with the Herfindahl-Hirschman Index and Zipf’s law coefficient. Second, we introduce the Logarithmic Stake Weight (LSW) model to further decentralize consensus mechanisms, showing that LSW consistently outperforms SRSW and current PoS models by an average of 51\% and 132\% across all metrics, respectively. \textcolor{black}{Third, we provide empirical assessments for both SRSW and LSW models} and extending the formal analysis with rigorous proofs that justify the observed results. Lastly, we broaden the scope of our empirical analysis by incorporating additional metrics across our blockchain set, using the latest available data to ensure relevance.

Our contributions are four-fold: 
\vspace{-2pt}
\begin{enumerate}
    \item Adapt and formalize decentralization metrics specifically for evaluating consensus mechanisms.
    \item We empirically demonstrate the challenge of weight concentration and its impact on decentralization in prominent blockchains. 
    \item We introduce the Square Root Stake Weight (SRSW) and Logarithmic Stake Weight (LSW) as mechanisms to address the weight concentration challenge. 
    \item Our work pioneers a data-driven approach to consensus research, offering novel insights and solutions to enhance decentralization in blockchain systems.
\end{enumerate}

This paper is organized as follows. We begin with a classification of consensus mechanisms, based on finality, in the following section. After adapting metrics to capture decentralization in Section~\ref{sec:metrics}, we perform an extensive analysis of empirical data in Section~\ref{sec:empiricalanalysis}. To address the stake concentration challenges identified in our analysis, we introduce the SRSW and LSW models in Section~\ref{sec:srsw-model}. In Section~\ref{sec:evaluation}, we evaluate the effectiveness of our models both theoretically and empirically. The paper concludes with a discussion of related work in Section~\ref{sec:relatedwork} and identifies avenues for future research in Section~\ref{sec:conclusion}.

\section{Consensus Mechanism Foundations and Classification}
\label{sec:consensusclassification}
This section delineates the consensus mechanisms and validator set selection process, setting the stage for the empirical analysis of decentralization in blockchains.

\subsection{Classification of Consensus Mechanisms Based on Finality}
A blockchain comprises a ledger that has blocks of transactions. The consensus mechanism orchestrates the process of reaching agreement on the content and order of blocks within the ledger~\cite{zhang2022reaching}. Agreement is reached among a designated set of validators, sometimes referred to as miners~\cite{nakamoto2008bitcoin}, witnesses~\cite{li2020comparison}, or sequencers~\cite{motepalli2023sok}. A consensus mechanism is deemed Byzantine Fault Tolerant (BFT) if it withstands a certain proportion of validators with malicious behaviour, in addition to crash failures~\cite{lamport2019byzantine}. 

Two properties are guaranteed by consensus mechanisms: \textit{safety}, ensuring all correct validators agree on the same content and order of blocks, and \textit{liveness}, ensuring the continual production of new blocks without indefinite delays~\cite{pass2017analysis,garay2015bitcoin}. Building upon the safety property, we introduce the concept of finality, also known as commitment. Finality of a block \( b \) at time $t$, denoted \( 0 \leq f(b, t) \leq 1\), indicates the probability with which the block has been appended to the ledger~\cite{anceaume2020finality}. When \( f(b, t) =1 \), it signifies \textit{total finality}, i.e., the block $b$ cannot be reverted or abandoned. Achieving this level of finality is essential for the immutability of the ledger. There are two distinct ways to realize finality in consensus mechanisms:

\begin{definition}
\label{def:absolute-finality}
\textit{Instant Absolute Finality} is achieved when a block \( b \) is appended to the ledger at time $t_0$ and becomes irreversible instantly, such that \( f(b, t) = 1\quad\text{for all } t > t_0 \).
\end{definition}

\begin{definition}
\label{def:probabilistic-finality}
\textit{Eventual Probabilistic Finality} is achieved when the finality of an appended block \( b \) at time \( t_0 \) is expressed as \( f(b, t_0) = 1-\gamma \), where \( \gamma \) is a non-negative value less than 1 (i.e., \( 0 \leq \gamma < 1\)) that represents the deviation from total finality. Specifically, for any two points in time \( t_1 \) and \( t_2 \) such that \( t_2 > t_1 \), it follows that \( f(b, t_1) < f(b, t_2) \). As time progresses, \( f(b, t) \) gradually converges to 1 as \( \gamma \) approaches zero. 
\begin{equation}
\lim_{\gamma \to 0} f(b, t) = 1
\end{equation}
\end{definition}

Based on how consensus mechanisms achieve finality, they can be categorized into two types: Nakamoto-style and classical consensus, as shown in Table~\ref{tab:types-consensus-mechanisms}.
\begin{table}[]
\caption{Consensus mechanisms classified based on finality}
\label{tab:types-consensus-mechanisms}
\renewcommand{\arraystretch}{1.5}
\centering
\begin{tabular}{|l|l|l|}
\hline
\multicolumn{1}{|c|}{} & \multicolumn{1}{c|}{\textbf{Classical consensus}}                     & \multicolumn{1}{c|}{\textbf{Nakamoto-style}}                          \\ \hline \hline
\textbf{Finality}      & \begin{tabular}[c]{@{}l@{}}Absolute and \\ instant\end{tabular} & \begin{tabular}[c]{@{}l@{}}Probabilistic and \\ eventual\end{tabular} \\ \hline
\textbf{Priniciple}    & Safety over liveness                                                  & Liveness over safety                                                  \\ \hline
\textbf{Attestation}   & Quorum of validators                                                  & Only the proposer                                                     \\ \hline
\textbf{Resources}     & Requires a priori knowledge                                            & No constraints                                                        \\ \hline
\textbf{Communication} & Supports partial synchrony                                            & Synchronous                                                           \\ \hline
\textbf{Examples}      & \begin{tabular}[c]{@{}l@{}}PBFT~\cite{castro1999practical}, HotStuff~\cite{yin2019hotstuff}, \\ Tendermint~\cite{buchman2018latest,buchman2016tendermint}\end{tabular} & \begin{tabular}[c]{@{}l@{}}Nakamoto~\cite{nakamoto2008bitcoin},  \\Ouroborus~\cite{david2018ouroboros}\end{tabular}        \\ \hline
\textbf{In practise}   & Diem~\cite{baudet2019state}                                                                  & Bitcoin~\cite{nakamoto2008bitcoin}                                                               \\ \hline
\end{tabular}

\end{table}

The Nakamoto-style consensus embodies probabilistic finality, meaning that the system eventually approaches total finality with time~\cite{kim2023taxonomic}. This style is used in Bitcoin, where a block is considered to have reached finality after the confirmation of 6 subsequent blocks, approximately an hour~\cite{nakamoto2008bitcoin}. Nakamoto-style consensus mechanisms prioritize liveness over safety, ensuring the continual production of new blocks; however, the ledger's order remains susceptible to forking~\cite{garay2015bitcoin},  i.e., the current order of the ledger may be altered, until time \( t \).

Conversely, classical consensus mechanisms prioritize safety over liveness. In this style, no blocks are appended to the ledger until absolute finality is achieved, rendering finality deterministic and immediate~\cite{kim2023taxonomic}. An example is the PBFT consensus~\cite{castro1999practical}, where a designated proposer, one of the validators, broadcasts a block and, absolute finality is achieved when a quorum of validators attests on the proposed block and the block is appended to the ledger.

A notable distinction between these styles also lies in the block generation process. In Nakamoto-style consensus, a single proposer is responsible for proposing a block. This design \textcolor{black}{does not} presuppose knowledge of resources, such as the hash power in PoW, and assumes synchronous communication, i.e., messages are broadcast within a bounded time~\cite{lewis2021does}. On the other hand, classical consensus protocols assume a priori knowledge of the total available resources~\cite{buchman2016tendermint}, such as the stake distribution of validators in PoS. The notion of a quorum $\mathbb{Q}$, facilitated through certificates of attestation, necessitates having finite resources, a topic explored further in the subsequent subsection.

For the scope of this work, our focus is classical consensus mechanisms for multiple reasons. Primarily, these mechanisms facilitate fast finality along with high performance in terms of throughput and latency, compared to Nakamoto-style consensus~\cite{yin2019hotstuff,buchman2016tendermint}. Secondly, the employment of quorum $\mathbb{Q}$ certificates enables these protocols to function effectively in a partially synchronous environment, thereby tolerating indefinite periods of asynchrony~\cite{lewis2021does}. Thirdly, classical consensus mechanisms have undergone rigorous examination over several decades~\cite{zhang2022reaching}, with seminal contributions such as Raft~\cite{ongaro2014search} and PBFT~\cite{castro1999practical}, finding applications in safety-critical domains such as aviation systems~\cite{wensley1978sift,siewiorek2005fault}.

In the subsequent sections, the discussion extends to Sybil resistance and the intricacies of weighted classical consensus.

\subsection{Weighted Consensus}
Traditional classical consensus mechanisms, such as PBFT~\cite{castro1999practical}, HotStuff~\cite{yin2019hotstuff}, and PrestigeBFT~\cite{zhang2023prestigebft}, are designed to be able to tolerate up to one-third of the validator set being faulty, where a faulty validator may exhibit malicious behavior or be offline. In these mechanisms, a designated block proposer is required to collect attestations from the validator set to form a quorum certificate. Let the validator set be represented as \( N = \{n_1, n_2, \ldots, n_m \} \), where $n_i$ represents a validator. A quorum certificate is formed with attestations from at least a super-majority of validators, denoted as \( \mathbb{Q} \), such that:

\begin{equation}
\mathbb{Q} \geq \left(\frac{2}{3}\right) m \text{, where } m = |N|
\end{equation}
Note that while we assume the protocol can be able to tolerate up to one-third of the total validators being faulty, some protocols may have different failure assumptions~\cite{miller2016honey}. In such cases, $\mathbb{Q}$ must be adjusted accordingly.

Classical BFT consensus mechanisms were initially conceived for permissioned systems, where the identities of all validators are established. When deployed in permissionless environments with (pseudo)anonymous identities, these mechanisms become susceptible to Sybil attacks, wherein a malicious actor could create multiple validators to subvert the consensus mechanism~\cite{douceur2002sybil,motepalli2024delay}. To mitigate this vulnerability and achieve Sybil resistance without relying on trusted intermediaries, Algorand~\cite{gilad2017algorand}, Ouroboros~\cite{david2018ouroboros} and Tendermint~\cite{buchman2016tendermint} pioneered PoS mechanism. In PoS, validators stake the native tokens of the system as a means of establishing their identity. These tokens are subject to penalization if validators engage in malicious behavior~\cite{motepalli2021reward}. Given that the native tokens are finite and the security of the consensus impacts the tokens' market value, validators are rationally incentivized to act correctly, thus enhancing the system's security. The development of PoS protocols, characterized by variably staked tokens, paves the way for the adoption of weighted consensus.

\textbf{\textit{Weighted consensus}} encompasses traditional classical consensus as a subset, where traditional models are effectively a special instance with uniform weights across validators. In weighted consensus, validators have varying weights in the consensus mechanism~\cite{edwardThesis}. In practice, in PoS/DPoS blockchains such as Cosmos~\cite{buchman2016tendermint}, the influence of a validator \( n_k \) in the consensus is quantified by their weight, \( w_k > 0\), which is a function of their staked tokens \( s_k\).

\begin{equation}
    w_k = s_k
\vspace{-6pt}
\end{equation}
Unlike traditional classical consensus mechanisms where a quorum ceritifcate is achieved based on the absolute number of validators, in weighted consensus, a quorum necessitates garnering two-thirds of the total weight. The total weight \( W \) of the system is:
\vspace{-4pt}
\begin{equation}
    W = \sum_{k}^{m} w_k \quad \forall k \in N
    \vspace{-6pt}
\end{equation}
The quorum certificate for weighted consensus for validator set $N$, is given below:
\vspace{-4pt}
\begin{equation}
\mathbb{Q} \geq \left( \frac{2}{3}\right) W
\vspace{-4pt}
\end{equation}
 Moreover, higher weight could also imply higher rewards or a higher probability of being selected as a block proposer~\cite{david2018ouroboros}. The evolution from traditional to weighted consensus, underscored by PoS/DPoS, is a pivotal adaptation to suit permissionless blockchains. We focus on weighted consensus in the rest of this work.

\subsection{Validator Set Selection}
Classical weighted consensus assumes that resources, such as total staked tokens in PoS, are finite and known a priori. These staked tokens are used to rank candidates interested in becoming validators and to choose the validator set. Various mechanisms exist for validator set selection, including PoS~\cite{gilad2017algorand}, DPoS~\cite{saad2020comparative}, delay towers~\cite{motepalli2022decentralizing}, and reputation mechanisms~\cite{de2018pbft}. This work does not make specific assumptions regarding the mechanism of validator set selection; instead, it focuses on the validators' engagement in the consensus mechanism.

The validator set typically remains fixed for a specified time interval, known as an \textit{epoch}. Following each epoch, a new validator set is selected through a \textit{reconfiguration} event~\cite{duan2022foundations,motepalli2022decentralizing}. Reconfiguration tends to consider updated stakes and involves eliminating faulty validators. 

It is also important to acknowledge that some blockchains, such as Algorand~\cite{gilad2017algorand}, use mechanisms like random sortation to randomly select a subset of candidates as validators every epoch. Our study focuses on systems where the validator set is deterministically defined, thereby excluding blockchains that employ random committee selection processes.

\section{Consensus Decentralization Metrics}
\label{sec:metrics}
In consensus mechanisms, \textit{decentralization} means reaching an agreement on the contents and order of transactions without centralized control, ensuring that no single validator or group of validators dominates the process. While challenging to precisely define~\cite{kiayias2022sok,schneider2003decentralization,sharma2023unpacking}, decentralization is essential for consensus mechanisms, as it underpins trust in the blockchains.

Our discussion draws inspiration from the \( (m,\varepsilon,\delta) \)-decentralization model described in ``Impossibility of Full Decentralization in Permissionless Blockchains''~\cite{kwon2019impossibility}. Here, \( m \) indicates the cardinality of the validator set, 
and \( \varepsilon \) represents the weight disparity between the most influential (richest) 
and the \( \delta \)-th percentile validator. The ideal case is full decentralization, expressed as \( (m,0,0) \) for a sufficiently large \( m \), that occurs when all validators have equal influence. While the \( (m,\varepsilon,\delta) \)-model captures the essence of decentralization, 
it lacks quantifiable metrics for comparing the decentralization of different blockchains. Therefore, we introduce additional metrics, as shown in Table~\ref{tab:metrics_symbols}, to effectively quantify and compare decentralization across different blockchains.

\subsection{Validator Set Cardinality (\( m \))} The number of validators participating in the consensus mechanism, i.e., \( m = |N| \).

Inference: A higher \( m \) suggests better decentralization, aligning with the \( (m,\varepsilon,\delta) \)-model.

Limitations: In weighted consensus, \( m \) alone may not reflect true decentralization. For example, if \( m=1000 \) but one validator holds 90\% of the total weight, it contradicts the decentralization ethos.

\subsection{Gini Coefficient (\( G \))}
Gini measures wealth inequality, commonly used in socioeconomic studies~\cite{ceriani2012origins,gini1921measurement,sitthiyot2020simple}. In consensus mechanisms, it assesses validators' influence disparity, indicating deviation from \( (m,0,0) \)-decentralization.

\( G \) is calculated using the Lorenz curve, which graphically elucidates the weight distribution among validators~\cite{gastwirth1972estimation}. The Lorenz curve plots the cumulative share of validators (sorted by their weight) on the X-axis against the cumulative share of their weight on the Y-axis. The formula for \( G \) is:
   \vspace{-4pt}
    \begin{equation}
        G = 1 - \frac{2 \times B}{A + B}
    \end{equation}
where \( B \) is the area between the Line of Equality and the Lorenz Curve, and \( A \) is the area beneath the Lorenz Curve. The line of equality illustrates a hypothetical scenario of equal weight distribution, while the Lorenz curve depicts the actual distribution of weights~\cite{sitthiyot2021simple}. The area between these two curves represents the extent of inequality in the weight distribution~\cite{gastwirth1972estimation}.
        
Inference: \( G \) ranges between 0 and 1, with 0 indicating equitable distribution (higher decentralization) and values closer to 1 indicating concentration of weight (lower decentralization).

Limitations: \( G \) alone may not fully capture decentralization, as it does not account for validator set cardinality \( m \). For example, a system with a single validator (\( m =1\)) would have \( G =0 \), yet be highly centralized.
\begin{table}[]
\renewcommand{\arraystretch}{1.5}
\centering

\caption{Decentralization metrics for consensus }
\label{tab:metrics_symbols}
\begin{tabular}{|l|l|l|l|}
\hline
\textbf{Symbol} & \textbf{Metric}                                                          & \textbf{Range}                                                                            & \textbf{Ideal} \\ \hline \hline
\( m \)               & Validator set cardinality                                                & \( m > 0\)                                                                          & higher        \\ \hline
\( G \)               & Gini Index                                                               &  \( 0 \leq G \leq 1\)                                                                & lower          \\ \hline
\( \rho_{\mathbb{N}_L} \)             & \begin{tabular}[c]{@{}l@{}}Nakamoto Coefficient \\ for Liveness\end{tabular} & \begin{tabular}[c]{@{}l@{}}\( 0\leq \rho_{\mathbb{N}_L} \leq 100\)\end{tabular} & higher         \\ \hline
\( \rho_{\mathbb{N}_S} \)               & \begin{tabular}[c]{@{}l@{}}Nakamoto Coefficient\\ for Safety\end{tabular}    & \begin{tabular}[c]{@{}l@{}}\( 0\leq \rho_{\mathbb{N}_S} \leq 100\)\end{tabular} & higher        \\ \hline
HHI               & \begin{tabular}[c]{@{}l@{}}Herfindahl-Hirschman\\ Index\end{tabular}    & \begin{tabular}[c]{@{}l@{}}\( 0\leq \text{HHI} \leq 1\)\end{tabular} & lower        \\ \hline
\( G_{\varphi^{L}} \)       & \begin{tabular}[c]{@{}l@{}}Shapley for Liveness\\ Gini Index\end{tabular}    & \begin{tabular}[c]{@{}l@{}}\( 0\leq G_{\varphi^{L}} \leq 1\)\end{tabular} & lower        \\ \hline
\( G_{\varphi^{S}} \)       & \begin{tabular}[c]{@{}l@{}}Shapley for Safety\\ Gini Index\end{tabular}    & \begin{tabular}[c]{@{}l@{}}\( 0\leq G_{\varphi^{S}} \leq 1\)\end{tabular} & lower        \\ \hline
\( \mathcal{Z} \)       & \begin{tabular}[c]{@{}l@{}}Zipf's Law\\ Coefficient\end{tabular}    & \begin{tabular}[c]{@{}l@{}}\( 0\leq \mathcal{Z}\)\end{tabular} & lower        \\ \hline
\end{tabular}

\end{table}

\subsection{Nakamoto Coefficient - Liveness (\(\rho_{\mathbb{N}_L} \))}
Quantifies the minimum percentage of validators~\cite{balajidecentralization} required to disrupt block production~\cite{zhang2022reaching} or censor transactions~\cite{censorshipData}. \( \rho_{\mathbb{N}_L} \) encapsulates fault tolerance in weighted consensus. Specifically:

\begin{equation}
    \rho_{\mathbb{N}_L} = \frac{\mathbb{N}_L}{m} \times 100
\end{equation}

where \(\mathbb{N}_L \) is cardinality of the smallest subset of validators $L$ whose cumulative weight is at least one-third of the total:
\begin{equation}
    \mathbb{N}_L = \min\{|L| \mid L \subseteq N, \sum_{i \in L} w_i \geq \frac{1}{3} W\}
\end{equation}

This percentage facilitates comparison across blockchains of varying sizes. Unlike the \( (m,\varepsilon,\delta) \)-decentralization model, which focuses on the richest validator, \( \rho_{\mathbb{N}_L} \) considers the weight distribution across the top one-third of validators relative to the entire set, offering a broader view of the system's decentralization.

Inference: A higher \( \rho_{\mathbb{N}_L} \) (closer to 100\%) indicates better decentralization and resilience to liveness attacks~\cite{censorshipData}.

Limitations: Firstly, \( \rho_{\mathbb{N}_L} \) must be interpreted alongside the total validator count \( m \), similar to $G$. Secondly, \( \rho_{\mathbb{N}_L} \) assumes correct behavior among validators. Any collusion among validators may distort \( \rho_{\mathbb{N}_L} \), rendering it an inaccurate measure of decentralization.

\subsection{Nakamoto Coefficient - Safety (\( \rho_{\mathbb{N}_S} \))}
Safety relies on finality—ensuring once blocks are appended, they become immutable. Safety compromises have severe consequences on the ledger's integrity, such as ledger re-ordering or loss of funds~\cite{li2023hard,sridhar2023better}. The Nakamoto Coefficient for Safety (\(\rho_{\mathbb{N}_S} \)) is the minimum percentage of the validators required to compromise safety~\cite{balajidecentralization,pass2017analysis}. Specifically, \( \rho_{\mathbb{N}_S} \) is given as:
\begin{equation}
\rho_{\mathbb{N}_S} = \frac{\mathbb{N}_S}{m}
\end{equation}
where \( \mathbb{N}_S \) is the cardinality of the smallest subset of validator set $S$ whose combined weight can form a quorum \( \mathbb{Q} \):
\begin{equation}
\mathbb{N}_S = \min\{|S| \,|\, S \subseteq N, \sum_{i \in S} w_i \geq \mathbb{Q} \}
\end{equation}
In conjunction with \(\rho_{\mathbb{N}_L} \), \( \rho_{\mathbb{N}_S}\) complements the \( (m,\varepsilon,\delta) \)-decentralization framework by quantifying the concentration of weight that affects system safety.
    
Inference: Higher values of \( \rho_{\mathbb{N}_S} \), close to 100, especially in systems with a large validator set (\( m \)), signifies robust decentralization.

Limitations: The assumption that validators in \( \rho_{\mathbb{N}_S} \) calculations are non-colluding may not reflect real-world scenarios, potentially limiting its accuracy for safety evaluation. Moreover, reducing a system's safety to a single metric like \( \rho_{\mathbb{N}_S} \) risks oversimplifying safety complexities~\cite{wang2021ethereum}. Needless to say, the interpretation of \( \rho_{\mathbb{N}_S} \) is incomplete without considering \( m \).

\subsection{Herfindahl-Hirschman Index (HHI)}

The HHI is a measure commonly used by regulatory authorities to determine market concentration and competitiveness~\cite{calkins1983new}. In consensus mechanisms, HHI measures the concentration of stake among validators. Let \( w_k \) represent the weight of validator \( k \), and \( w_k' \) be its normalized weight. The HHI is defined as:

\begin{equation}
    HHI = \sum_{k=1}^{m} (w_k')^2, \text{ where } w_k' = \frac{w_k}{\sum_{k=1}^{m} w_k}
\end{equation}

The $HHI$ complements the \( (m, \epsilon, \delta) \)-decentralization model by quantifying stake concentration, thereby offering an additional perspective on weight distribution.

Inference: The HHI ranges from 0 (maximum decentralization) to 1 (complete centralization).

Limitations: The non-linear emphasis on large stake weights reduces sensitivity to moderate concentrations, potentially masking oligopolistic structures; i.e., if there are multiple powerful validators, $HHI$ fails to capture decentralization. Similar to $G$, \( \rho_{\mathbb{N}_L} \) and \( \rho_{\mathbb{N}_S} \), $HHI$ does not consider \( m \), i.e., a small validator set can exhibit a low $HHI$ yet still be centralized.

\subsection{Shapley Value for Liveness and Gini Analysis (\( G_{\varphi^{L}} \))}
The Shapley value, from cooperative game theory~\cite{winter2002shapley}, is used to quantify each validator's marginal contribution to consensus liveness. By evaluating all possible coalitions involving validator \( k \), the Shapley value for liveness (\( \varphi^{L} \)) of validator \( k \) is given by:

\begin{equation}
    \varphi^{L}_k = \sum_{S \subseteq N \setminus \{k\}} \frac{|S|! \, (m - |S| - 1)!}{m!} \left( v(S \cup \{k\}) - v(S) \right)
\end{equation}
where \( S \subseteq N \setminus \{k\} \) is any subset of validators excluding \( k \), and the value function \( v(S) \) is defined based on the 33\% liveness threshold in classical consensus, as follows:

\begin{equation}
    v(S) = \begin{cases}
        1, & \text{if } \sum_{i \in S} w_i > 0.33 \times W \\
        0, & \text{otherwise}
    \end{cases}
\end{equation}

Specifically, \(v(S) = 1\) indicates that the coalition has the power to halt the system or censor transactions. The marginal contribution of validator \( k \) depends on whether their inclusion allows the coalition to cross this liveness threshold.

Unlike previous metrics, \( \varphi^{L} \) captures the coalitional influence of validators, emphasizing their roles in coalition dynamics and maintaining liveness.

Inference: Each validator \( k \) has an individual Shapley value \( \varphi^{L}_k \). To derive a unified system-wide decentralization metric, we calculate the Gini coefficient over Shapley values, denoted by \( G_{\varphi^{L}} \), to capture inequality in coalitional influence. \( G_{\varphi^{L}} = 0 \) implies equal contributions from all validators, indicating a highly decentralized system.

Limitations: Calculating \( \varphi^{L} \) is computationally intensive due to the factorial growth of subsets, making it impractical for large networks. It is also sensitive to validators with significant weights and assumes influence is solely based on weights, ignoring network topology.

\subsection{Shapley Value for Safety and Gini Analysis (\( G_{\varphi^{S}} \))}

We extend our analysis of coalition dynamics to safety using the Shapley value for safety, denoted by \( \varphi^{S} \). Similar to \( \varphi^{L} \), it quantifies each validator's marginal contribution, with the difference being in the value function \( v(S) \), which now reflects the 66\% safety threshold in classical consensus:

\begin{equation}
    v(S) = \begin{cases}
        1, & \text{if } \sum_{i \in S} w_i > 0.66 \times W \\
        0, & \text{otherwise}
    \end{cases}
\end{equation}

Inference: Each validator \( k \) has an individual Shapley value \( \varphi^{S}_k \), and we derive a unified system-wide metric using the Gini coefficient, denoted by \( G_{\varphi^{S}} \), to capture inequality in safety contributions. A low Gini coefficient, i.e., \( G_{\varphi^{S}} \to 0 \), implies higher decentralization.

Limitations: The limitations for \( G_{\varphi^{S}} \) are identical to those described for \( G_{\varphi^{L}} \), including computational complexity and sensitivity to validator stake distribution.

\subsection{Zipf's Law Coefficient (\( \mathcal{Z} \))}

Zipf's Law states that when a list of values is sorted in decreasing order, each value is approximately inversely proportional to its rank raised to an exponent~\cite{zipf2016human}. This pattern is commonly found in linguistics for word frequencies~\cite{piantadosi2014zipf} and in economics for wealth distributions~\cite{newman2005power}, offering insights into inequality.

In consensus mechanism, the Zipf's law coefficient (\( \mathcal{Z} \)) analyses validator weight distribution across ranks. Specifically, if \( w_r \) represents the consensus weight of the validator ranked \( r \), the distribution follows:
\vspace{-4pt}
\begin{equation}
    w_r \propto r^{-\mathcal{Z}}
\end{equation}
where \( \mathcal{Z} \) is the Zipf exponent. \( \mathcal{Z} \) adds a unique perspective to the \( (m, \epsilon, \delta) \)-decentralization model. Unlike the Gini index or Shapley values, \( \mathcal{Z} \) provides a rank-based assessment of stake distribution and its adherence to a power-law pattern.

Inference: A higher \( \mathcal{Z} > 2 \) suggests that a few validators hold most of the consensus weight, indicating centralization. Conversely, \( \mathcal{Z} \to 0 \) reflects a more uniform distribution, indicating better decentralization.

Limitations: \( \mathcal{Z} \) is an empirical observation with limited theoretical basis. It assumes a power-law relationship and is sensitive to rank ordering, making it vulnerable to statistical noise and unable to account for external incentives that might alter distribution patterns.

\subsection{Summary}
In this section, we introduced various metrics to quantify decentralization in consensus mechanisms, see Table~\ref{tab:metrics_symbols}. Despite the limitations acknowledged, the synergy of these metrics holistically captures the essence of decentralization, aligning with the \( (m,\varepsilon,\delta) \)-decentralization model. We leverage these metrics to quantify the decentralization of existing blockchain systems in practice in the subsequent section.

\section{Empirical Data Analysis}
\label{sec:empiricalanalysis}
\begin{table*}[htp]
\vspace{-12pt}
\caption{Decentralization metrics for permissionless blockchains.}
\centering
\label{tab:metrics}
\renewcommand{\arraystretch}{1.5}
\begin{tabular}{|l|l|l|l|l|l|l|l|l|l|l|l|}
\hline
\multicolumn{1}{|c|}{} & \textbf{Application} & \textbf{\( m \)} & \( G \) & \( \rho_{\mathbb{N}_L} \) & \( \rho_{\mathbb{N}_S} \) & HHI &\( G_{\varphi^{L}} \)  &  \( G_{\varphi^{S}} \) & \( \mathcal{Z} \) & \textbf{\( \varepsilon \) in \( (m,\varepsilon,0) \)} & \( (m,\varepsilon,50) \) \\ \hline \hline
\textbf{Aptos}         & L1 blockchain~\cite{aptos}   & 191        & 0.55       & 11.52 & 27.23 & 0.011 &0.561   & 0.559 & 3.135  & 1127805207331 &  7.99   \\ \hline
\textbf{Axelar}        & Interoperability~\cite{axelar}         & 75         & 0.37       & 16&41.33 & 0.02 &0.375        & 0.373 &0.928 & 591.9 & 2.83 \\ \hline
\textbf{BNB (Binance)} & L1 blockchain~\cite{bnb}           & 57         & 0.55       & 14.04 &28.07& 0.036 &0.556         & 0.558 &2.316  & 159511.43 & 8.41 \\ \hline
\textbf{Celestia} & Data availability~\cite{celestia}          & 239         & 0.8       & 2.93&11.3&0.026 &0.797         & 0.803 & 5.679 & 4468193447.12 & 4940.77 \\ \hline
\textbf{Celo}  & L2* (L1 blockchain)~\cite{celo}                     & 86         & 0.42&13.95&38.37       & 0.019 &0.426       & 0.425 &1.517 & 8805762239.16 & 2.85       \\ \hline
\textbf{Cosmos}        & L1/interoperability~\cite{cosmos}        & 200 & 0.72       & 3.5 & 12  & 0.029 & 0.73&0.731 & 1.573 & 23841.39 & 72.73\\ \hline
\textbf{Injective} &  DeFi/interoperability~\cite{injective}        & 60        & 0.4       & 11.67 & 38.33          & 0.033 &0.409&0.414 &0.72   & 26.62 & 8.96      \\ \hline
\textbf{Osmosis}       & DeFi/DEX~\cite{osmosis}       & 150        & 0.54       & 6.67&28&0.02 &0.554         & 0.557 &1.022    & 108.05 & 14.52            \\ \hline
\textbf{Polygon}       & L2/ZK-rollup~\cite{polygon}                  & 105        & 0.79       & 3.81 &11.43         & 0.051 & 0.8 & 0.8 & 2.057   & 35677.03 & 114.84         \\ \hline
\textbf{Sui}           & L1 blockchain~\cite{sui}     & 108        & 0.35       & 15.74 & 39.81        & 0.014 & 0.351 &0.358 &0.63 & 10.37 & 3.0     \\ \hline
\end{tabular}
\end{table*}
\subsection{Scope and Methodology of Data Collection}
In our study, we focus on permissionless blockchains that use classical consensus mechanisms, particularly those with weighted consensus as outlined in Section~\ref{sec:consensusclassification}. We limit our analysis to blockchains with deterministic validator set selection methods, such as PoS and DPoS. Notably, all the blockchains examined in this study employ DPoS for Sybil resistance.

For our empirical analysis, we selected ten blockchains, as shown in Table~\ref{tab:metrics}: Aptos, Axelar, BNB (Binance), Celestia, Celo, Cosmos, Injective, Osmosis, Polygon, and Sui. Most of the selected blockchains use CometBFT (formerly Tendermint~\cite{buchman2016tendermint}) as their consensus mechanism~\cite{injectiveBlog, bnbBlockchain, celestiaDocs, axelarBlockchain, cosmosBlockchain, polygonBlockchain}, while Aptos, Sui, and Celo use Jolteon~\cite{gelashvili2022jolteon}, Mysticeti~\cite{babel2023mysticeti}, and IstanbulBFT~\cite{celoBlockchain, moniz2020istanbul}, respectively.

These blockchains were chosen for their diverse applications, including smart contracts Layer-1 (L1) blockchains, interoperability protocols~\cite{belchior2021survey}, Layer-2 (L2) scaling solutions~\cite{mccorry2021sok}, data availability protocols~\cite{al2019lazyledger}, and decentralized exchanges~\cite{lehar2021decentralized}, ensuring the wide applicability of our findings. Our sample comprises blockchains with at least 300 million USD market capitalization, cumulatively amounting to a market capitalization of 60 billion USD as of December 14, 2023~\cite{coinmarketcap}.

Data collection was automated via scripts interfacing with the blockchains' RPC endpoints to fetch active validator sets and their staked tokens. Daily snapshots were taken to account for epoch changes, and the data was systematically archived in a public GitHub repository\footnote{https://github.com/sm86/destake} to facilitate transparency and accessibility for ongoing and subsequent analyses.

\subsection{Data Analysis}
In Table~\ref{tab:metrics}, we present decentralization metrics for ten blockchains based on validator set data \textcolor{black}{using snapshots on} October 25, 2024, except for Celestia and Binance, where data \textcolor{black}{snapshots on December 14, 2023 were} used due to deprecated RPC endpoints. To ensure reliability, we monitored the data over several months and confirmed stable trends.

Validator set cardinality (\( m \)) shows a range of 57 to 239 validators across analyzed blockchains. Most validator sets remained constant or saw minor increases since December 2023, while Celestia and Aptos increased \( m \) significantly by 33\%. Despite these validator set cardinalities, Nakamoto coefficients reveal vulnerabilities: only 2.93\% to 16\% of validators are required to compromise liveness (\( \rho_{\mathbb{N}_L} \)), and 11.3\% to 41.33\% for safety (\( \rho_{\mathbb{N}_S} \)). For instance, on Polygon, the top four validators could censor an application, and a coalition of the top twelve validators could alter the ledger. These findings underscore the need to strengthen blockchain liveness and safety using available validator sets.

An examination of the Gini coefficient (\( G \)) in Table~\ref{tab:metrics}, ranging from 0.35 to 0.8, reveals a significant concentration of weight, indicating disproportionate stake distribution among a small subset of validators, particularly in Celestia, Polygon, and Cosmos. The implications of stake concentration are further underscored by the values of $\varepsilon$ in the $(m, \varepsilon, \delta)$-decentralization model. Both at $\delta = 0$ and $\delta = 50$, the $\varepsilon$ values substantially deviate from the ideal zero value needed for $(m, 0, 0)$, i.e., full decentralization. While most blockchains demonstrate modest but concerning trends in Gini and Nakamoto coefficients since December 2023, Axelar and Sui show a promising trend. 

We observe low HHI values across all blockchains ($\textsc{max}(\text{HHI}) \leq 0.05$), indicating the absence of a single dominant validator, aligning with decentralization ideals. However, this highlights a limitation of HHI: while high HHI values confirm monopolistic centralization, low HHI values alone do not fully capture decentralization, as they mask an uneven distribution of influence among an oligopoly of validators.

The Gini indices for the Shapley values, both for liveness (\( G_{\phi_L} \)) and safety (\( G_{\phi_S} \)), are highly similar across all blockchains. This similarity indicates that validators' contributions to both liveness and safety are nearly identical, which reflects the skewed stake distribution. More notably, the Gini indices for Shapley values closely match the Gini index of stake weights (\( G \)). Further analysis reveals that the Pearson correlation between stake weights and Shapley values is close to one (\( \sim 0.9988 \)). This suggests that, despite the cooperative game-theoretic foundation of Shapley values—intended to quantify the marginal contribution of each validator in the consensus—their similarity with stake-weights underscores that influence is predominantly determined by stake concentration. While Shapley values might seem redundant, they validate the robustness of other metrics and emphasize the influence from economic stake.

Finally, we analyze Zipf's coefficient (\( \mathcal{Z} \)). The values of \( \mathcal{Z} \) vary significantly among the blockchains. Celestia, Aptos, Binance, and Polygon exhibit high \( \mathcal{Z} \), indicating concentrated stake distributions that hinder decentralization. In contrast, Sui, Injective, and Axelar have lower, more ideal \( \mathcal{Z} \) values. This difference likely arises from mechanisms such as minimum validator thresholds (e.g., Sui) or fixed validator sets (e.g., Axelar), which limit eligibility and diverge from the power law, resulting in lower \( \mathcal{Z} \).

\subsection{Inferences and Open Challenges}
We observe that while validator sets are adequately sized, influence within these sets remains disproportionately concentrated. This analysis raises key questions: for a validator set of cardinality \( m \), how can we enhance Nakamoto coefficients? Furthermore, how can we achieve a more equitable weight distribution across metrics, specifically \( G \), HHI, and \( \mathcal{Z} \)?

A straightforward solution could be a one-validator, one-vote system with equal weight for all validators and a minimum stake threshold for eligibility. However, blockchains like Ethereum illustrate practical limitations. With a low threshold of 32 ETH, Ethereum’s network includes over 800,000 validators~\cite{ethereumCoindesk}, necessitating random selection for consensus, which impacts performance metrics like finality time. A larger issue is operational centralization, as entities like Lido control approximately 32.7\% of all validators~\cite{lidoArticle}, resulting in a low Nakamoto liveness coefficient (\(\mathbb{N}_L\approx1\)).

This highlights the challenge of preventing Sybil attacks while promoting genuine decentralization—specifically, reducing incentives to create multiple Sybil identities. Although modeling Sybil resistance is complex, we address pressing challenges of equitable stake distribution and improving Nakamoto coefficients for a given validator set in the following section.

\section{Advancing Decentralization: SRSW and LSW Quorums}
\label{sec:srsw-model}
In this section, we address the identified challenges. We begin by defining primitives needed for our proposed solution. This is followed by the two variants of updated consensus weight computations: SRSW (Square Root Stake Weighted) and LSW (Logarithmic Stake Weighted) quorums.

\subsection{Primitives and Assumptions}

\textbf{Validator Rewards:} 
Validators are incentivized with rewards for their participation in the consensus mechanism. We assume validators are rational and want to maximize their rewards. 

At the end of every epoch, each correct validator \( n_k \in N \) receives a reward, denoted as \( r_{n_k} \). The reward is calculated based on the system parameter \( \alpha \), representing the inflation factor that determines the rate of reward distribution. The reward for each validator is given by the equation:
\vspace{-4pt}
\begin{equation}
    r_{n_k} = \alpha w_k, \quad \text{where} \quad \alpha > 0
\end{equation}

If a correct validator holds \( s_k \) native tokens at the start of an epoch, their balance of native tokens after that epoch would increase to \( s_k + r_{n_k} \).

\textbf{Sybil Cost:}
We introduce a Sybil Cost, denoted as \( C > 0 \), to represent the additional expenses incurred by a validator operator when choosing to run multiple validator nodes instead of one. These expenses could include operational costs, such as computational resources or the amount of staked tokens required. We assume that \( C \) is sufficiently high, thereby providing resistance against Sybil attacks.

\textbf{Limit Validator Set Cardinality:}
We propose an upper limit for the validator set cardinality  \( M \), ensuring \( m \leq M \). To implement this in practice, the validator candidates are sorted based on their staked tokens, and we select the top \( M \) candidates for the validator set, i.e., the threshold staked tokens to become a validator is the stake of the \( M \)th validator candidate, represented by \( s_{M} \). Accordingly, the rewards for a validator candidate \( n_k \) with \( s_k \) staked tokens for an epoch is as follows: 
\begin{equation}
r_{n_k}=\begin{cases}
          \alpha w_k \quad &\text{if} \, s_k > s_M,  \\
          0 \quad &\text{if} \, s_k \leq s_M. \\
     \end{cases}
\end{equation}

Capping the validator set cardinality is justified for two reasons. Firstly, in line with classical consensus mechanisms, an increase in the number of validators tends to decrease the system scalability, measured in throughput and latency~\cite{yin2019hotstuff}. Secondly, by imposing a maximum limit on the number of validators, and a minimum capital requirement of \( s_{M} \) staked tokens for running a validator, the system discourages single entities from dominating the validator set (i.e., Sybil attacks), further explored in the following section. This design, implemented in blockchains such as Axelar~\cite{axelarBlockchain} and Celo~\cite{celoBlockchain}, helps in balancing scalability and decentralization.

\vspace{-8pt}
\subsection{SRSW Function}
Building on these primitives, we now focus on the challenge of achieving equitable influence among validators to improve the metrics such as Nakamoto and Gini coefficients for a given validator set.

We propose the \textit{Square Root Stake Weight (SRSW)} function, a novel approach that diverges from traditional linear weightings in quorum \( \mathbb{Q} \) computations. The SRSW function calculates the weight \( w^*_i \) of each validator \( n_i \) based on the square root of their staked tokens \( s_i \), as defined by:
\vspace{-4pt}
\begin{equation}
   w^*_i = \sqrt{s_i} 
   \vspace{-4pt}
\end{equation}

The revised quorum \( \mathbb{Q}^* \) for the validator set \( N \) is formulated as follows:
\vspace{-6pt}
\begin{equation}
\mathbb{Q}^* \geq \left( \frac{2}{3}\right) \sum_{i} w_i^* = \left( \frac{2}{3}\right) \sum_{i} \sqrt{s_i} \quad \forall i \in N
  \vspace{-4pt}
\end{equation}

Contrasting with linear models, the SRSW function aims to reduce the disproportionate influence of validators with high staked tokens. In essence, the SRSW approach diminishes the weight disparities between validators with varying stake amounts. 

The validator rewards \( r^*_{n_i} \) are structured to reinforce rational decisions. The reward formula is:
\begin{equation}
    r^*_{n_i} = \alpha w_i^* = \alpha \sqrt{s_i}, \text{ if } s_i > s_M; \text{ otherwise, } 0.
\end{equation}
This incentivizes validators to keep or increase their stakes above the threshold \( s_M \), aligning individual gains with the system's stability. In other words, the system should satisfy the following condition. 
\vspace{-4pt}
\begin{equation}
    r_{n_i}^* > r_{n_j}^* + r_{n_k}^* - C, \text{ where } s_i \geq s_j + s_k
    \vspace{-4pt}
\end{equation}
This inequality implies that a validator with a combined stake \( s_i \) gains more rewards by maintaining a single identity rather than dividing into multiple validators with smaller stakes \( s_j \) and \( s_k \).

Consider a validator with \( s_i = 4 \) and \( s_M = 3 \). The options are: split into \( s_j^1 = 2, s_k^1 = 2 \), or \( s_j^2 = 3, s_k^2 = 1 \), or not split. In the first case, \( s_j^1, s_k^1 < s_M \) yield no rewards. In the second, rewards are \( \alpha \sqrt{3} \) for \( s_j^2 \) only, as \( s_k^2 < s_M \). Not splitting, \( \alpha \sqrt{4} \), offers the highest reward. Our approach effectively deters stake fragmentation and mitigates Sybil attacks, promoting consolidated stakes as a strategically rational choice.

When both \( s_i \) and \( s_j \) exceed \( s_M \), the inequality is adjusted in terms of weights:
\vspace{-2pt}
\begin{equation}
    \sqrt{s_i} > \sqrt{s_j} + \sqrt{s_k} - \frac{C}{\alpha},\text{ where } s_i \geq s_j + s_k \text{ and } s_i, s_j > s_M
\end{equation}

Here, our assumption of high \( C \) plays a crucial role to make the division of validator stakes non-rational, which we explore in the subsequent sections.

\subsection{LSW Function}
We propose the Log Stake Weight (LSW) function to improve consensus decentralization. Similar to SRSW, LSW diverges from traditional linear weighting but, unlike SRSW, adopts a logarithmic function. Specifically, LSW calculates the weight \( w^\phi_i \) of each validator \( n_i \) based on the logarithm of their staked tokens \( s_i \), given by:

\vspace{-4pt}
\begin{equation}
    w^\phi_i = \log{s_i}
\vspace{-4pt}
\end{equation}

The quorum \( \mathbb{Q}^\phi \) for the validator set \( N \) is defined as:

\vspace{-6pt}
\begin{equation}
\mathbb{Q}^\phi \geq \left( \frac{2}{3}\right) \sum_{i} w_i^\phi = \left( \frac{2}{3}\right) \sum_{i} \log{s_i} \quad \forall i \in N
\vspace{-4pt}
\end{equation}

This logarithmic transformation reduces the impact of validators with disproportionately large stakes without negating their influence, promoting a balance between decentralization and meaningful incentives for larger stakeholders.

Validator rewards in the LSW model are defined analogously to SRSW, align rewards with participation to incentivize consensus contribution:
\vspace{-8pt}
\begin{equation}
    r^\phi_{n_i} = \alpha w^\phi_i = \alpha \log{s_i}, \quad \text{if } s_i > s_M; \text{ otherwise, } 0.
\vspace{-4pt}
\end{equation}

This model incentivizes validators to maintain their stakes above a threshold \( s_M \) while the logarithmic weighting discourages stake fragmentation aimed at gaining disproportionate influence. As a result, LSW contributes towards improving decentralization.

\subsection{Discussion}
\textbf{Implementation Specifics.}
\textcolor{black}{Any PoS/DPoS system can transform to SRSW and LSW models quite easily as the only change would be on the voting power computations. This could happen at reconfiguration events at epoch boundaries where validator set $V$ is updated. Every validator in the protocol can compute the choose approach, whether it is LSW or SRSW, to compute the new voting power.}

\textcolor{black}{Transitioning existing PoS/DPoS systems to SRSW or LSW models is feasible by modifying voting power computations during reconfiguration events at epoch boundaries when the validator set \(V\) is updated. At each epoch boundary \(t\), all validators \(v_i \in V\) compute their new voting power \(w_i(t)\) using either the SRSW or LSW function:}

\begin{equation}
    w_i(t) = 
    \begin{cases}
        \sqrt{s_i(t)} & \text{for SRSW} \\
        \ln(1 + s_i(t)) & \text{for LSW}
    \end{cases}
\end{equation}

\textcolor{black}{where \(s_i(t)\) is the stake of validator \(v_i\) at time \(t\). These weights are then used in subsequent consensus rounds for voting and block proposal selection. This approach allows a smooth transition without requiring fundamental changes to the underlying consensus mechanism.}

\textbf{Determining M.}
A critical aspect of this approach is the determination of \( M \), the maximum cardinality of the validator set. A low \( M \) might risk insufficient decentralization, while an excessively high \( M \) could impact system performance due to increased communication complexity. Although a fixed \( M \) may appear counterintuitive to decentralization, delegation in DPoS mechanisms enable individual token holders to collectively participate in consensus, thereby mitigating potential centralization concerns~\cite{saad2020comparative}. In this work, we do not prescribe a specific value for \( M \), as it depends on the algorithm and implementation. In practice, we observed around one hundred validators is the ideal number for current algorithms~\cite{yin2019hotstuff,0LNetwork}.

\textbf{Sybil costs.} 
\textcolor{black}{
To address Sybil attack concerns in non-linear staking models, we propose concrete implementations of Sybil costs (\(C\)). For SRSW rewards (\(R(S)=\sqrt{S}\)), the minimum \(C\) needed to prevent stake splitting is derived as follows: Consider a validator with stake \(S\) splitting into \(n\) identities. The reward for a single identity is \(\sqrt{S}\), while for \(n\) split identities it is \(n \cdot \sqrt{S/n}\). To avoid splitting, we require \(\sqrt{S} > n \cdot \sqrt{S/n} - C\), yielding \(C \geq (\sqrt{n} - 1)/(n-1) \cdot \sqrt{S}\). }

\textcolor{black}{For LSW, rewards (\(R(S)=\ln(1+S)\)), a similar inequality applies: The reward for a single identity is \(\ln(1+S)\), while for \(n\) split identities it is \(n \cdot \ln(1+S/n)\). To avoid splitting, we require \(\ln(1+S) > n \cdot \ln(1+S/n) - C\), approximating to \(C \geq \ln(n)/(n-1)\) for large \(S\).}

\textcolor{black}{The minimum \(C\) can be implemented through various mechanisms: (1) stake-weighted resource limits similar to Solana's Quality of Service mechanism~\cite{solanaQos}, (2) minimum stake thresholds comparable to Ethereum's 32 ETH requirement~\cite{ethereumCoindesk}, (3) reputation-based penalties for new identities, (4) geospatial validation mechanisms using spatial coordinates to limit one validator per physical location~\cite{motepalli2023analyzing}, and (5) Proof of Personhood~\cite{worldcoin,borge2017proof,hodgson2002know} to verify unique human identity without compromising anonymity. These mechanisms provide necessary economic disincentives against Sybil attacks while preserving the decentralization advantages of non-linear stake weighting.}

Furthermore, detecting cartels among validators is challenging, especially at the protocol layer; therefore, we do not address this issue in the consensus mechanism. In conclusion, we acknowledge that establishing Sybil costs is more a complex socio-economic challenge than a technical one and is beyond the scope of this work.

In summary, we propose the SRSW and LSW function with updated quorums and reward functions, and provide considerations on \( M \) and \( C \). We now turn our attention to evaluating these functions.

\section{Evaluation: Improved Decentralization}
\label{sec:evaluation}

In this section, we evaluate the decentralization effectiveness of the SRSW and LSW functions compared to the linear model, using the metrics defined in Section~\ref{sec:metrics}. We begin with a formal theoretical analysis, followed by an empirical evaluation.

\begin{figure*}[htbp]
  \centering
  \includegraphics[trim={0.6cm 0.6cm 0.6cm 1cm}, width=\textwidth]{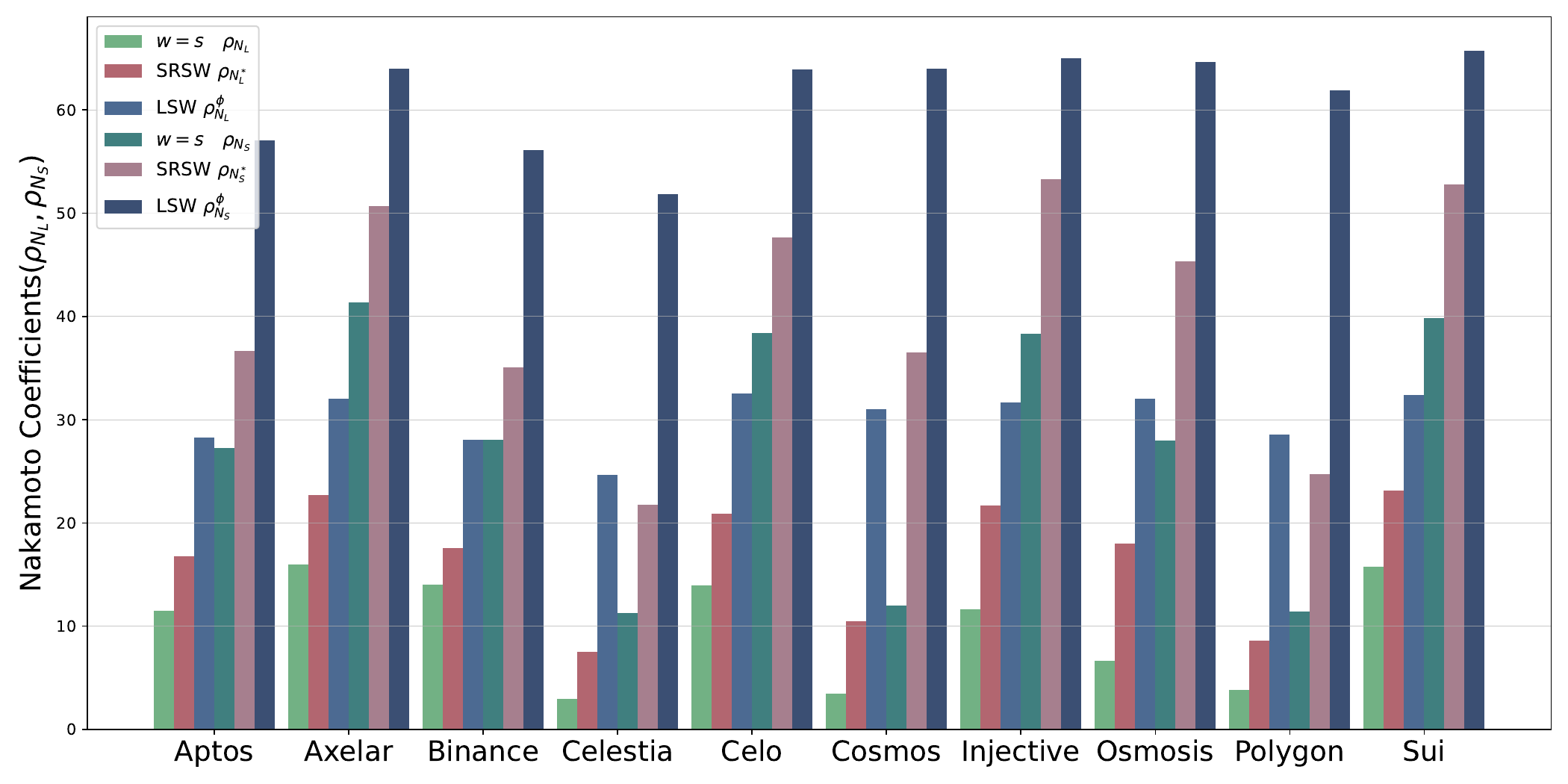}
  \caption{Comparison of Nakamoto coefficients for safety and liveness in linear, SRSW and LSW weighting functions}
  \label{fig:nakamoto_coefficients}
  
\end{figure*}

\subsection{Formal Analysis of Decentralization Metrics}
We demonstrate how the SRSW and LSW models improve decentralization over the linear stake-weight model using established metrics.

\vspace{-4pt}
\begin{lemma}
\label{lemma:nakamoto-srsw}
Given a validator set \( N \), the SRSW model's Nakamoto coefficients, for liveness \( \rho_{\mathbb{N}_L}^* \) and safety \( \rho_{\mathbb{N}_S}^* \), are greater than or equal to that of the linear stake-weight model, represented as \( \rho_{\mathbb{N}_L} \) and \( \rho_{\mathbb{N}_S} \), respectively.
\end{lemma}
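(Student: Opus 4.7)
The plan is to reduce the lemma to a single monotonicity statement about top-$k$ weight fractions, driven by the concavity of $\sqrt{\cdot}$. First I would sort the validators in decreasing stake order, $s_1 \ge s_2 \ge \cdots \ge s_m > 0$. Because $\sqrt{\cdot}$ is strictly monotone, the same prefix $\{1,\ldots,k\}$ is the heaviest-$k$ set under both the linear and the SRSW weightings. Since all weights are positive, the smallest coalition whose total weight crosses a given threshold is achieved greedily on such a prefix, so in both models the Nakamoto coefficients are determined by these sorted prefixes. This reduces the lemma to comparing the cumulative top-$k$ fractions $r(k) = \sum_{i=1}^{k} s_i / W$ and $r^{*}(k) = \sum_{i=1}^{k} \sqrt{s_i}/W^{*}$, where $W$ and $W^{*}$ are the total linear and SRSW weights.

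The core step is to show $r^{*}(k) \le r(k)$ for every $1 \le k < m$. Cross-multiplying and cancelling the symmetric block of pairs $i,j \le k$, the inequality reduces to the pairwise claim
\begin{equation}
\sum_{i \le k < j} \sqrt{s_i s_j}\,\bigl(\sqrt{s_i} - \sqrt{s_j}\bigr) \;\ge\; 0,
\end{equation}
which holds term-by-term since $s_i \ge s_j$ whenever $i \le k < j$. Intuitively, the concave $\sqrt{\cdot}$ compresses large stakes more than small ones, so any top-$k$ block holds a (weakly) smaller share of the total under SRSW than under linear weights.

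Applying this inequality at the liveness threshold $1/3$ and the safety threshold $2/3$, the minimum prefix length needed to reach each threshold can only grow when moving from linear to SRSW, i.e.\ $\mathbb{N}_L^{*} \ge \mathbb{N}_L$ and $\mathbb{N}_S^{*} \ge \mathbb{N}_S$. Dividing by the common cardinality $m$ then yields $\rho^{*}_{\mathbb{N}_L} \ge \rho_{\mathbb{N}_L}$ and $\rho^{*}_{\mathbb{N}_S} \ge \rho_{\mathbb{N}_S}$. The main obstacle is really just the reduction to the pairwise inequality above; the rest is bookkeeping about greedy prefixes and thresholds. The same proof template extends verbatim to any concave increasing reweighting (in particular to $\log$), which sets up the corresponding LSW analogue later in the evaluation.
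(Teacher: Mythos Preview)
Your argument is correct and follows the same high-level thread as the paper---concavity of $\sqrt{\cdot}$ forces the top-$k$ validators to hold a weakly smaller share of the total, hence the minimal threshold-crossing prefix can only grow---but you actually prove the key step, whereas the paper essentially asserts it. The paper's proof just writes down the two defining inequalities for $K$ and $K^{*}$, invokes ``the concave nature of the square root function, as per Jensen's inequality,'' and concludes $|K^{*}|\ge|K|$ without further justification. Your reduction to the Lorenz-type inequality $r^{*}(k)\le r(k)$ and the explicit pairwise identity
\[
\sum_{i\le k<j}\sqrt{s_is_j}\bigl(\sqrt{s_i}-\sqrt{s_j}\bigr)\ge 0
\]
fills exactly the gap the paper leaves open: it is not obvious how Jensen alone yields the comparison of minimal subset sizes, while your argument makes the mechanism transparent and needs nothing beyond monotonicity and the sign of each cross term. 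A side benefit, which you note, is that the proof template applies verbatim to any increasing concave reweighting, so the LSW analogue in the paper's Theorem~\ref{theorem:lsw-superior} follows by the same computation with $\log$ in place of $\sqrt{\cdot}$, rather than by a second appeal to Jensen.
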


\begin{proof}
Consider the Nakamoto coefficient for liveness computation in linear stake weight, let $K$ be the smallest subset such that: 
\vspace{-2pt}
\begin{equation}
    \sum_{n_i\in K} s_i > \frac{1}{3} \sum_{n_i\in N} s_i
\end{equation}
Similary in the SRSW model, let $K^*$ be the smallest subset satisfying the condition:
\vspace{-4pt}
\begin{equation}
    \sum_{i \in K^*} \sqrt{s_i} > \frac{1}{3} \sum_{i\in N} \sqrt{s_i}
\end{equation}
The concave nature of the square root function, as per Jensen's inequality~\cite{abramovich2004refining}, necessitates a larger $K^*$ to fulfill this condition in the SRSW model compared to $K$ in the linear model, thereby implying:
\vspace{-6pt}
\begin{equation}
    |K^*| \geq |K| \implies N_L^* \geq N_L
\end{equation}
\begin{equation}
    \frac{N_L^*}{m} \geq  \frac{N_L}{m} \implies \rho_{\mathbb{N}_L}^* \geq \rho_{\mathbb{N}_L}
\end{equation}
Hence, the Nakamoto coefficient for liveness is higher for SRSW compared to linear model. Similarly, we can prove for Nakamoto coefficient-safety, i.e., $\rho_{\mathbb{N}_S}^* \geq \rho_{\mathbb{N}_S}$. 
\end{proof}

\begin{lemma}
\label{lemma:srsw-gini}
Given \( N \), the Gini of SRSW and linear models, represented by \( G^* \) and \( G \), respectively, satisfy \( G^* \leq G \).
\end{lemma}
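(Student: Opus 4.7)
The plan is to prove Lemma~\ref{lemma:srsw-gini} by showing that the Lorenz curve under the SRSW weights dominates (lies pointwise on or above) the Lorenz curve under the linear weights, which by the formula for $G$ given earlier in the paper immediately yields $G^* \leq G$.

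First I would set up notation by sorting the validators in ascending order of stake so that $s_1 \leq s_2 \leq \cdots \leq s_m$. Because $\sqrt{\cdot}$ is strictly increasing, this same ordering also sorts the SRSW weights $w_i^* = \sqrt{s_i}$ in ascending order, so both Lorenz curves are built from the same permutation of validators. The Lorenz value at rank $k$ under the linear model is $L(k) = \bigl(\sum_{i=1}^k s_i\bigr)/\bigl(\sum_{i=1}^m s_i\bigr)$, and under SRSW it is $L^*(k) = \bigl(\sum_{i=1}^k \sqrt{s_i}\bigr)/\bigl(\sum_{i=1}^m \sqrt{s_i}\bigr)$.

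The heart of the argument is to prove $L^*(k) \geq L(k)$ for every $k \in \{1,\dots,m\}$. I would rewrite this inequality in the equivalent cross-multiplied form
\begin{equation}
\frac{\sum_{i=1}^k \sqrt{s_i}}{\sum_{i=1}^k s_i} \;\geq\; \frac{\sum_{i=1}^m \sqrt{s_i}}{\sum_{i=1}^m s_i},
\end{equation}
and observe that each side is a weighted average of the values $1/\sqrt{s_i}$ (using $s_i$ as the weights), since $\sqrt{s_i} = s_i \cdot (1/\sqrt{s_i})$. The function $s \mapsto 1/\sqrt{s}$ is strictly decreasing, and the left-hand side averages it only over the bottom $k$ validators (those with the smallest stakes, hence largest values of $1/\sqrt{s_i}$), while the right-hand side averages it over all validators. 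Hence the left side is at least the right side, establishing Lorenz dominance.

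From Lorenz dominance, the area beneath the SRSW Lorenz curve is at least as large as that beneath the linear Lorenz curve, so by the Gini formula in the paper, $G^* \leq G$. The main obstacle I anticipate is formalizing the "weighted average" comparison cleanly; the cleanest route is to observe that for $i \leq k \leq j$ one has $1/\sqrt{s_i} \geq 1/\sqrt{s_j}$, so removing the top $m-k$ validators from the average can only increase it, a fact I would either argue directly or invoke as a standard consequence of Chebyshev's sum inequality applied to the comonotone sequences $(s_i)$ and $(1/\sqrt{s_i})$. Equality throughout holds precisely when all $s_i$ are equal, which matches the intuition that SRSW strictly decreases inequality whenever stakes are non-uniform.
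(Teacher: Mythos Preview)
Your proof is correct and considerably more rigorous than the paper's own. The paper's argument is essentially heuristic: it refers back to Lemma~\ref{lemma:nakamoto-srsw} and the metric definitions, then asserts that the square-root transformation ``results in a more uniform distribution of weights, effectively reducing relative disparities,'' and concludes $G^* \leq G$ without ever formally linking concavity to the Gini formula. You instead establish Lorenz dominance---a strictly stronger property than the Gini inequality alone---by recognizing each ratio $\bigl(\sum_{i\le k}\sqrt{s_i}\bigr)/\bigl(\sum_{i\le k} s_i\bigr)$ as an $s_i$-weighted average of the nonincreasing sequence $(1/\sqrt{s_i})$ and comparing it to the full average; this is a genuine proof where the paper offers only intuition. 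Your approach also generalizes immediately to any reweighting $f$ for which $f(s)/s$ is nonincreasing on the range of stakes, so the same template would handle the LSW model treated later. What the paper's version buys is brevity; what yours buys is an actual argument plus the stronger Lorenz-dominance conclusion for free.
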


\begin{proof}
This proof draws upon the established principles from Lemma~\ref{lemma:nakamoto-srsw} and the definitions provided in Section~\ref{sec:metrics}.

In the SRSW model, the square root transformation applied to validator stakes results in a more uniform distribution of weights, effectively reducing relative disparities in stake sizes compared to the linear model. Consequently, this leads to a lower Gini index in the SRSW model, compared to the Gini index in the linear model.

Therefore, \( G^* \leq G \), indicating a more equitable distribution of validator influence under the SRSW model.
\end{proof}
\vspace{-5pt}

\begin{theorem}
\label{theorem:srsw-good}
Given a validator set \( N \), the SRSW model achieves a degree of decentralization, across key metrics, that is greater than or equal to that of the linear stake-weight model.
\end{theorem}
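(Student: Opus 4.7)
The plan is to treat this theorem as a consolidation result: establish that for each metric listed in Table~\ref{tab:metrics_symbols}, the SRSW model weakly dominates the linear stake-weight model in the decentralization direction. The unifying engine throughout will be the concavity of \( s \mapsto \sqrt{s} \), which acts as a weight-flattening transformation in the sense of majorization. I would begin by noting that SRSW only reshapes weights on the same validator set \( N \), so the cardinality \( m \) is preserved trivially; this handles one metric at no cost and isolates the question to the shape of the weight distribution.

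Next, I would directly invoke Lemma~\ref{lemma:nakamoto-srsw} to dispatch \( \rho_{\mathbb{N}_L}^* \geq \rho_{\mathbb{N}_L} \) and \( \rho_{\mathbb{N}_S}^* \geq \rho_{\mathbb{N}_S} \), and Lemma~\ref{lemma:srsw-gini} to dispatch \( G^* \leq G \). For the HHI, I would argue that the normalized SRSW shares \( w_i^{*}/W^{*} \) are majorized by the normalized linear shares \( w_i/W \): this is because for any two validators with \( s_i \geq s_j \), the SRSW ratio \( \sqrt{s_i}/\sqrt{s_j} = \sqrt{s_i/s_j} \) is strictly closer to \( 1 \) than the linear ratio \( s_i/s_j \). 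Since \( \mathrm{HHI} = \sum (w_k')^2 \) is a Schur-convex function of the share vector, this majorization comparison yields \( \mathrm{HHI}^* \leq \mathrm{HHI} \). For the Zipf exponent \( \mathcal{Z} \), the cleanest argument is substitutional: if the linear weights satisfy \( w_r \propto r^{-\mathcal{Z}} \), then \( w_r^{*} = \sqrt{w_r} \propto r^{-\mathcal{Z}/2} \), so fitting the same power law to the SRSW weights gives an exponent of \( \mathcal{Z}/2 \leq \mathcal{Z} \).

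For the two Shapley-based metrics \( G_{\varphi^L} \) and \( G_{\varphi^S} \), I would argue that the Shapley value induced by a weighted threshold game is monotone in validator weight (validators with larger \( w_i \) are pivotal in at least as many orderings), so the vector of Shapley values is comonotone with the vector of weights. Combining this comonotonicity with the fact that SRSW flattens the weight vector in the majorization order, I would conclude that the Shapley vector under SRSW is itself majorized by the linear-model Shapley vector, and hence its Gini coefficient satisfies \( G_{\varphi^{L,*}} \leq G_{\varphi^L} \), and similarly for safety. As a secondary justification, I would point to the empirical correlation of \( \sim 0.9988 \) between stake weights and Shapley values observed in Section~\ref{sec:empiricalanalysis}, which indicates that the Shapley-Gini inherits essentially the same ordering behavior as the stake-Gini.

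The hard part will be the Shapley step. The value function \( v(S) \) is discontinuous at the 1/3 and 2/3 thresholds, so a naive majorization argument does not transfer cleanly through the combinatorial sum defining \( \varphi^L_k \); a rigorous proof requires carefully establishing that the set of coalitions in which a validator is pivotal shrinks for large stakeholders and grows for small stakeholders under the square-root reweighting. I expect this to require a pivoting argument or an appeal to a symmetrization lemma for weighted voting games. All remaining steps (cardinality, Nakamoto, Gini, HHI, Zipf) reduce to elementary concavity or Schur-convexity observations and should go through directly.
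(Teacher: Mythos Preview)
Your proposal follows the same skeleton as the paper's proof: invoke Lemma~\ref{lemma:nakamoto-srsw} and Lemma~\ref{lemma:srsw-gini}, then appeal to the concavity of \(\sqrt{\cdot}\) to argue that HHI, Zipf, and the Shapley-based Gini indices all improve. The paper's argument is a two-sentence sketch (concavity lowers HHI and Gini; Shapley improves ``due to its correlation with weights''; Zipf decreases), whereas you supply explicit machinery---majorization and Schur-convexity for HHI, an exponent-halving substitution for \(\mathcal{Z}\), and a comonotonicity argument for Shapley. Your HHI and Zipf steps are genuinely more rigorous than what the paper offers. On the Shapley step you correctly flag that majorization of the Shapley vectors does not follow from comonotonicity alone and would require a pivoting argument; the paper does not close this gap either, resting on the same empirical correlation (\(\sim 0.9988\)) that you cite as secondary evidence. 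In short, your plan is sound and strictly more detailed than the paper's own proof, with the same acknowledged soft spot at \(G_{\varphi^L}\) and \(G_{\varphi^S}\).
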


\begin{proof}
Building on Lemma~\ref{lemma:nakamoto-srsw} and Lemma~\ref{lemma:srsw-gini}, the concave nature of the SRSW model reduces influence concentration by limiting large validators' dominance, resulting in lower HHI and Gini values. Additionally, due to its correlation with weights, the Shapley value distribution enhances decentralization. The Zipf’s coefficient also decreases, reflecting a balanced distribution among validators. Thus, across all metrics, the SRSW model achieves higher degree of decentralization compared to the linear model.
\end{proof}

\begin{theorem}
\label{theorem:lsw-superior}
Given a validator set \( N \), the LSW model achieves a higher degree of decentralization across key metrics than the SRSW model.
\end{theorem}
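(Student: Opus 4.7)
My plan is to mirror the structure used for Theorem~\ref{theorem:srsw-good}: prove two lemmas (Nakamoto and Gini) comparing LSW to SRSW, then lift those to the remaining metrics (HHI, Shapley Gini, Zipf). The crucial observation is that the map $s \mapsto \log s$ is ``more concave'' than $s \mapsto \sqrt{s}$, in the precise sense that if we write $w_i^{*} = \sqrt{s_i}$ and $w_i^{\phi} = \log s_i$, then $w_i^{\phi} = 2\log(w_i^{*}) = h(w_i^{*})$ for the concave increasing function $h(x) = 2\log x$. Thus the LSW weight vector is obtained by applying a further concave transformation to the SRSW weight vector, so the arguments that carried SRSW past the linear model apply a second time to carry LSW past SRSW.

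\textbf{Cascading through the metrics.} First I would prove an LSW analogue of Lemma~\ref{lemma:nakamoto-srsw}: given a validator set $N$, one has $\rho_{\mathbb{N}_L}^{\phi} \geq \rho_{\mathbb{N}_L}^{*}$ and $\rho_{\mathbb{N}_S}^{\phi} \geq \rho_{\mathbb{N}_S}^{*}$. This follows by exactly the Jensen-style argument used previously, but now applied to $h$: for any subset $K^{*}$ whose SRSW weights exceed the one-third (resp.\ two-thirds) threshold, the logarithmic compression shrinks large weights relative to small ones, so reaching the same fractional threshold under LSW forces $|K^{\phi}| \geq |K^{*}|$. Second, I would prove the Gini analogue $G^{\phi} \leq G^{*}$ by showing Lorenz dominance: because $h$ is concave and increasing, sorting validators by weight and forming cumulative shares places the LSW Lorenz curve uniformly above the SRSW Lorenz curve, which implies the Gini inequality. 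The same Lorenz dominance immediately yields $HHI^{\phi} \leq HHI^{*}$, since HHI is a Schur-convex function of the normalized weight vector and majorization is the dual of Lorenz dominance.

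\textbf{Shapley and Zipf.} For the Shapley-Gini metrics $G_{\varphi^{L}}$ and $G_{\varphi^{S}}$, I would invoke the empirical and structural observation highlighted in Section~\ref{sec:empiricalanalysis} that Shapley values are near-perfectly correlated with weights under threshold value functions; since LSW's weight distribution is more equal than SRSW's, the induced Shapley distribution inherits the same Lorenz dominance, yielding $G_{\varphi^{L}}^{\phi} \leq G_{\varphi^{L}}^{*}$ and analogously for safety. For Zipf's coefficient $\mathcal{Z}$, I would observe that fitting $w_r \propto r^{-\mathcal{Z}}$ on the compressed LSW ranks produces a flatter slope on a log-log plot than on SRSW ranks, because the log transformation sends multiplicative stake gaps to additive ones, reducing the exponent required to interpolate ranked weights. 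This gives $\mathcal{Z}^{\phi} \leq \mathcal{Z}^{*}$. Combining all of the above with Theorem~\ref{theorem:srsw-good} closes the chain linear $\succeq$ SRSW $\succeq$ LSW across every metric in Table~\ref{tab:metrics_symbols}.

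\textbf{Main obstacle.} The cleanest parts are the Nakamoto, Gini, and HHI inequalities, which follow from one Jensen/majorization argument applied to the concave composition $h = 2\log$. The harder step is making the Shapley and Zipf arguments fully rigorous: the Shapley value function is discrete (threshold-based), so Lorenz dominance of the weight vector does not automatically transfer to Lorenz dominance of Shapley values without a careful coalition-counting argument, and the Zipf claim depends on a least-squares fit whose monotonicity in the weight distribution is not immediate. I would either invoke the strong empirical linear coupling $\varphi^{L}_k \approx c \cdot w_k$ reported in the paper to reduce the Shapley step to the Gini step, or, for a tighter proof, exhibit a measure-preserving coupling between coalitions that shows any concave reweighting weakly equalizes pivotal-contribution mass. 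For Zipf, I would argue that applying a concave transformation to already-sorted ranks cannot increase the best-fit power-law exponent, which is a routine but slightly technical regression-monotonicity fact.
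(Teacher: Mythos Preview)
Your approach is essentially the paper's: establish $\rho_{\mathbb{N}_L}^{\phi} \geq \rho_{\mathbb{N}_L}^{*}$ from the greater concavity of $\log$ relative to $\sqrt{\cdot}$ (which you sharpen nicely via the composition $w^{\phi}=2\log w^{*}$), then cascade to the remaining metrics. Your version is in fact more detailed than the paper's own proof, which simply asserts the extension to Gini, HHI, Shapley, and Zipf in one sentence without the Lorenz/Schur machinery or the honest caveats you raise about the Shapley and Zipf steps.
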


\begin{proof}
Let \( K^* \) be the smallest subset of validators in SRSW satisfying:
\begin{equation}
    \sum_{n_i \in K^*} \sqrt{s_i} > \frac{1}{3} \sum_{n_i \in N} \sqrt{s_i}.
\end{equation}
For LSW, let \( K^{\phi} \) be the smallest subset satisfying:
\begin{equation}
    \sum_{n_i \in K^{\phi}} \log(s_i) > \frac{1}{3} \sum_{n_i \in N} \log(s_i).
\end{equation}
Since \( \log(s) \) is more concave than \( \sqrt{s} \), Jensen’s inequality~\cite{abramovich2004refining} implies that a larger subset \( K^{\phi} \) is required for LSW to meet this threshold, yielding:
\begin{equation}
    |K^{\phi}| \geq |K^*| \implies \rho_{\mathbb{N}_L}^{\phi} \geq \rho_{\mathbb{N}_L}^*.
\end{equation}

This advantage extends to the Nakamoto coefficient for safety, as well as other decentralization metrics, including Gini, HHI, Shapley values, and Zipf's coefficient. Thus, the LSW model achieves a higher degree of decentralization than the SRSW model.
\end{proof}

\begin{corollary}
\label{corollary:lsw-linear}
By Theorem~\ref{theorem:srsw-good} and Theorem~\ref{theorem:lsw-superior}, and using transitivity, the LSW model achieves a higher degree of decentralization than the linear stake-weight model across all metrics.
\end{corollary}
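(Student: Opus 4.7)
The plan is to derive the corollary purely by transitivity, doing no fresh analytical work beyond invoking the two preceding theorems. First, I would enumerate the decentralization metrics that appear in Section~\ref{sec:metrics} and Table~\ref{tab:metrics_symbols}: the Nakamoto coefficients $\rho_{\mathbb{N}_L}$ and $\rho_{\mathbb{N}_S}$, for which larger values indicate better decentralization, and the Gini index $G$, the HHI, the Shapley-based Gini indices $G_{\varphi^{L}}$ and $G_{\varphi^{S}}$, and the Zipf coefficient $\mathcal{Z}$, for which smaller values indicate better decentralization. Fixing an arbitrary validator set $N$ puts all three models (linear, SRSW, LSW) on a common footing so that metric values can be compared directly.

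The second step is to chain inequalities metric by metric. For each metric $\mu$, Theorem~\ref{theorem:srsw-good} supplies a comparison of the form $\mu^{*} \succeq_\mu \mu$, where $\succeq_\mu$ denotes ``at least as decentralized under $\mu$'' (i.e., $\geq$ for Nakamoto coefficients and $\leq$ for $G$, HHI, $G_{\varphi^{L}}$, $G_{\varphi^{S}}$, $\mathcal{Z}$). Theorem~\ref{theorem:lsw-superior} similarly yields $\mu^{\phi} \succeq_\mu \mu^{*}$. Since the underlying ordering on $\mathbb{R}$ is transitive in either direction, composing the two gives $\mu^{\phi} \succeq_\mu \mu$ for every metric $\mu$ in the list, which is exactly the statement of the corollary.

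The main (and only) obstacle I anticipate is a bookkeeping one: the directional conventions of the metrics are not uniform, so I would need to state the transitivity step separately for the ``higher-is-better'' metrics and the ``lower-is-better'' metrics, or alternatively wrap them with a single abstract preorder $\succeq_\mu$ as above to avoid case splits. The ``Ideal'' column of Table~\ref{tab:metrics_symbols} makes the correct orientation unambiguous. Because all the substantive analytical content, namely the Jensen-type concavity arguments, has already been absorbed into the two parent theorems, the corollary's proof reduces to one or two sentences invoking transitivity across the listed metrics; no new estimates on stakes, coalitions, or value functions are required.
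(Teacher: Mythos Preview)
Your proposal is correct and matches the paper's approach exactly: the paper provides no separate proof environment for this corollary, since the justification is fully contained in the statement itself (``By Theorem~\ref{theorem:srsw-good} and Theorem~\ref{theorem:lsw-superior}, and using transitivity''). Your additional bookkeeping on the orientation of each metric is more careful than what the paper spells out, but the underlying argument is identical.
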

\begin{figure}[ht]
    \includegraphics[width=\linewidth, trim= {0.6cm 0cm 1cm 1.2cm}, clip]{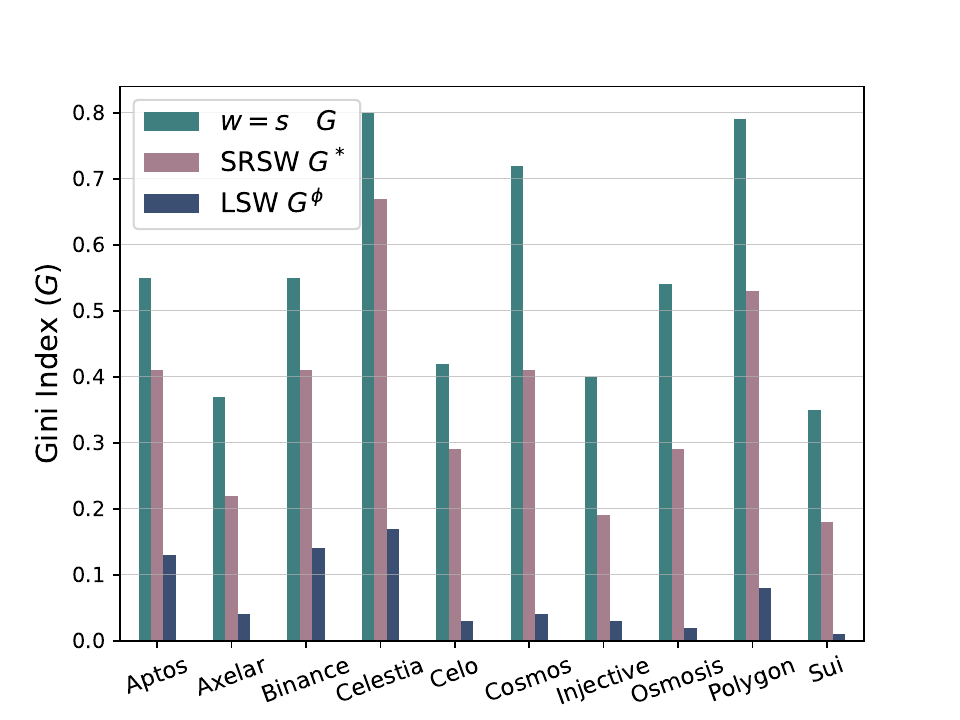}
        \caption{Comparison of Gini index}
        \label{fig:gini}
        \vspace{-12pt}
\end{figure}
\begin{figure}[ht]
   \includegraphics[width=\linewidth, trim= {1cm 0.8cm 1cm 1cm}, clip]{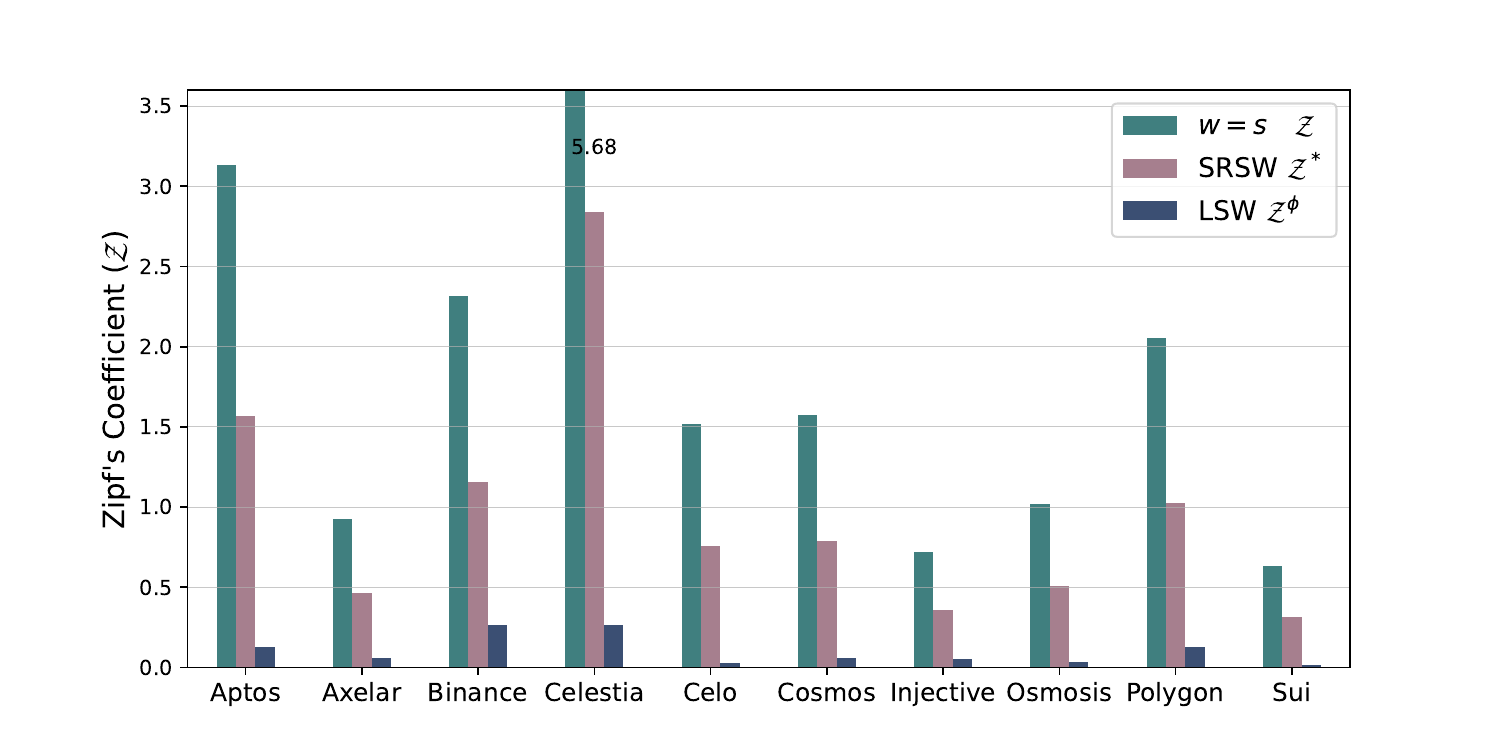}
        \caption{Comparison of Zipf's law coefficient}
        \label{fig:zipfs}
    
\end{figure}

\begin{table*}[ht]
\centering
\caption{Percentage improvements of \textcolor{blue}{SRSW} and \textcolor{red}{LSW} over linear token weights for various decentralization metrics}
\renewcommand{\arraystretch}{1.5} 
\setlength{\tabcolsep}{8pt} 

\begin{tabular}{|l|c|c|c|c|c|c|c|}
\hline
\label{tab:percentimprovement}
\textbf{Blockchain} & \textbf{\(\rho_{\mathbb{N}_L} \) (\%)} & \textbf{\(\rho_{\mathbb{N}_S} \) (\%)} & \textbf{$G$ (\%)} & \textbf{$HHI$ (\%)} & \textbf{\( G_{\varphi^{L}} \) (\%)} & \textbf{\( G_{\varphi^{L}} \)(\%)} & \textbf{$\mathcal{Z}$ (\%)} \\ \hline \hline
Aptos      & \textcolor{blue}{45.4} / \textcolor{red}{145.4} & \textcolor{blue}{34.59} / \textcolor{red}{109.59}  & \textcolor{blue}{25.45} / \textcolor{red}{76.36}  & \textcolor{blue}{27.27} / \textcolor{red}{45.45}  & \textcolor{blue}{26.56} / \textcolor{red}{72.19} & \textcolor{blue}{25.76} / \textcolor{red}{70.84} & \textcolor{blue}{49.98} / \textcolor{red}{95.92} \\ \hline
Axelar     & \textcolor{blue}{41.69} / \textcolor{red}{100}   & \textcolor{blue}{22.6} / \textcolor{red}{54.85}   & \textcolor{blue}{40.54} / \textcolor{red}{89.19}  & \textcolor{blue}{25} / \textcolor{red}{35}        & \textcolor{blue}{38.93} / \textcolor{red}{84.53}    & \textcolor{blue}{40.75} / \textcolor{red}{87.94} & \textcolor{blue}{50} / \textcolor{red}{93.32} \\ \hline
Binance    & \textcolor{blue}{24.93} / \textcolor{red}{99.93} & \textcolor{blue}{25.01} / \textcolor{red}{100}    & \textcolor{blue}{25.45} / \textcolor{red}{74.55}  & \textcolor{blue}{25} / \textcolor{red}{47.22}  & \textcolor{blue}{24.64} / \textcolor{red}{73.92}    & \textcolor{blue}{25.45} / \textcolor{red}{73.84} & \textcolor{blue}{50} / \textcolor{red}{88.51} \\ \hline
Celestia   & \textcolor{blue}{157} / \textcolor{red}{742.66}  & \textcolor{blue}{92.57} / \textcolor{red}{359.12} & \textcolor{blue}{16.25} / \textcolor{red}{78.75}  & \textcolor{blue}{53.85} / \textcolor{red}{80.77}  & \textcolor{blue}{15.56} / \textcolor{red}{76.54} & \textcolor{blue}{16.44} / \textcolor{red}{77.33} & \textcolor{blue}{49.99} / \textcolor{red}{95.33} \\ \hline
Celo       & \textcolor{blue}{50.04} / \textcolor{red}{133.41} & \textcolor{blue}{24.24} / \textcolor{red}{66.67}  & \textcolor{blue}{30.95} / \textcolor{red}{92.86} & \textcolor{blue}{21.05} / \textcolor{red}{36.84}  & \textcolor{blue}{30.05} / \textcolor{red}{83.1} & \textcolor{blue}{31.29} / \textcolor{red}{86.59} & \textcolor{blue}{50.03} / \textcolor{red}{98.09} \\ \hline
Cosmos     & \textcolor{blue}{200} / \textcolor{red}{785.71}  & \textcolor{blue}{204.17} / \textcolor{red}{433.33} & \textcolor{blue}{43.06} / \textcolor{red}{94.44}  & \textcolor{blue}{68.97} / \textcolor{red}{82.76}  & \textcolor{blue}{42.05} / \textcolor{red}{91.37} & \textcolor{blue}{42.41} / \textcolor{red}{90.15} & \textcolor{blue}{50.03} / \textcolor{red}{96.25} \\ \hline
Injective  & \textcolor{blue}{85.69} / \textcolor{red}{171.38} & \textcolor{blue}{39.13} / \textcolor{red}{69.58}  & \textcolor{blue}{52.5} / \textcolor{red}{92.5}   & \textcolor{blue}{42.42} / \textcolor{red}{48.48}  & \textcolor{blue}{53.3} / \textcolor{red}{89.73}    & \textcolor{blue}{50.48} / \textcolor{red}{92.51} & \textcolor{blue}{50} / \textcolor{red}{92.78} \\ \hline
Osmosis    & \textcolor{blue}{169.87} / \textcolor{red}{379.76} & \textcolor{blue}{61.89} / \textcolor{red}{130.96} & \textcolor{blue}{46.3} / \textcolor{red}{96.3} & \textcolor{blue}{55} / \textcolor{red}{65}        & \textcolor{blue}{47.29} / \textcolor{red}{91.7}    & \textcolor{blue}{45.52} / \textcolor{red}{87.2} & \textcolor{blue}{50} / \textcolor{red}{96.48} \\ \hline
Polygon    & \textcolor{blue}{124.93} / \textcolor{red}{649.87} & \textcolor{blue}{116.62} / \textcolor{red}{441.56} & \textcolor{blue}{32.91} / \textcolor{red}{89.87} & \textcolor{blue}{56.86} / \textcolor{red}{80.39} & \textcolor{blue}{32.13} / \textcolor{red}{89} & \textcolor{blue}{32.5} / \textcolor{red}{89} & \textcolor{blue}{50.02} / \textcolor{red}{93.68} \\ \hline
Sui        & \textcolor{blue}{47.08} / \textcolor{red}{105.91} & \textcolor{blue}{32.58} / \textcolor{red}{65.13}  & \textcolor{blue}{48.57} / \textcolor{red}{97.14} & \textcolor{blue}{28.57}  / \textcolor{red}{35.71}  & \textcolor{blue}{46.72}/ \textcolor{red}{89.74} & \textcolor{blue}{45.53}  / \textcolor{red}{88.55} & \textcolor{blue}{50} / \textcolor{red}{97.46} \\ \hline \hline
\textbf{Average} & \textcolor{blue}{94.66} / \textcolor{red}{331.40} & \textcolor{blue}{65.34} / \textcolor{red}{183.08} & \textcolor{blue}{36.20} / \textcolor{red}{88.20} & \textcolor{blue}{40.39} / \textcolor{red}{55.76} & \textcolor{blue}{35.72} / \textcolor{red}{84.18} & \textcolor{blue}{35.61} / \textcolor{red}{84.4} & \textcolor{blue}{50.01} / \textcolor{red}{95.62} \\ \hline

\end{tabular}
\end{table*}

\subsection{Empirical Validation}
In this subsection, we use the validator set data outlined in Section~\ref{sec:empiricalanalysis} to recalculate weights using SRSW and LSW functions. We then compare decentralization metrics of the SRSW and LSW models against the linear stake weight model for specified validator sets. Our observations confirm the results of formal analysis. Our findings across the decentralization metrics are summarized as follows:

\begin{itemize}
    \item \textbf{Nakamoto Coefficients:} As shown in Figure~\ref{fig:nakamoto_coefficients}, both SRSW and LSW exhibit substantial improvements in Nakamoto coefficients compared to the linear model. Table~\ref{tab:percentimprovement} shows that \(\rho_{\mathbb{N}_L}\) increases ranged from 24.93\% to 169.87\% for SRSW and from 100\% to 785.71\% for LSW across different blockchains. Similar trends are observed for \(\rho_{\mathbb{N}_S}\), with average improvements of 65.34\% for SRSW and 183.01\% for LSW. These increases indicate stronger centralization resistance, as more validators are required to influence the consensus.
    \item \textbf{Gini Index:} The SRSW function yields a decrease in the Gini index, ranging from 16.25\% to 48.57\%, as shown in Figure~\ref{fig:gini} and Table~\ref{tab:percentimprovement}. For LSW, the improvement in Gini is even higher, ranging from 74.55\% to 97.14\%. This reduction reflects a more equitable distribution of stake weights, enhancing decentralization.
    \item \textbf{HHI:} While HHI values are already favorable due to the absence of monopoly in consensus weights, we observed additional improvements with SRSW and LSW. Specifically, SRSW achieved an average improvement of 40.39\%, while LSW yielded a 55.76\% improvement over the linear weight model.
    \item \textbf{Shapley Values:} Consistent with our empirical analysis, the Gini index of Shapley values for both liveness and safety closely follows the trends of the Gini index for validator weights, with a mean correlation of 0.96 between percentage improvements.
    \item \textbf{Zipf's Coefficient:} As shown in Figure~\ref{fig:zipfs}, substantial improvements are observed in the Zipf's law coefficient, particularly with LSW, indicating more decentralization.
\end{itemize}

\begin{figure*}[ht]
    \centering
    \begin{subfigure}[t]{0.5\textwidth}
        \centering
        \includegraphics[width=\linewidth, trim= {0.6cm 0.4cm 1cm 1.4cm}, clip]{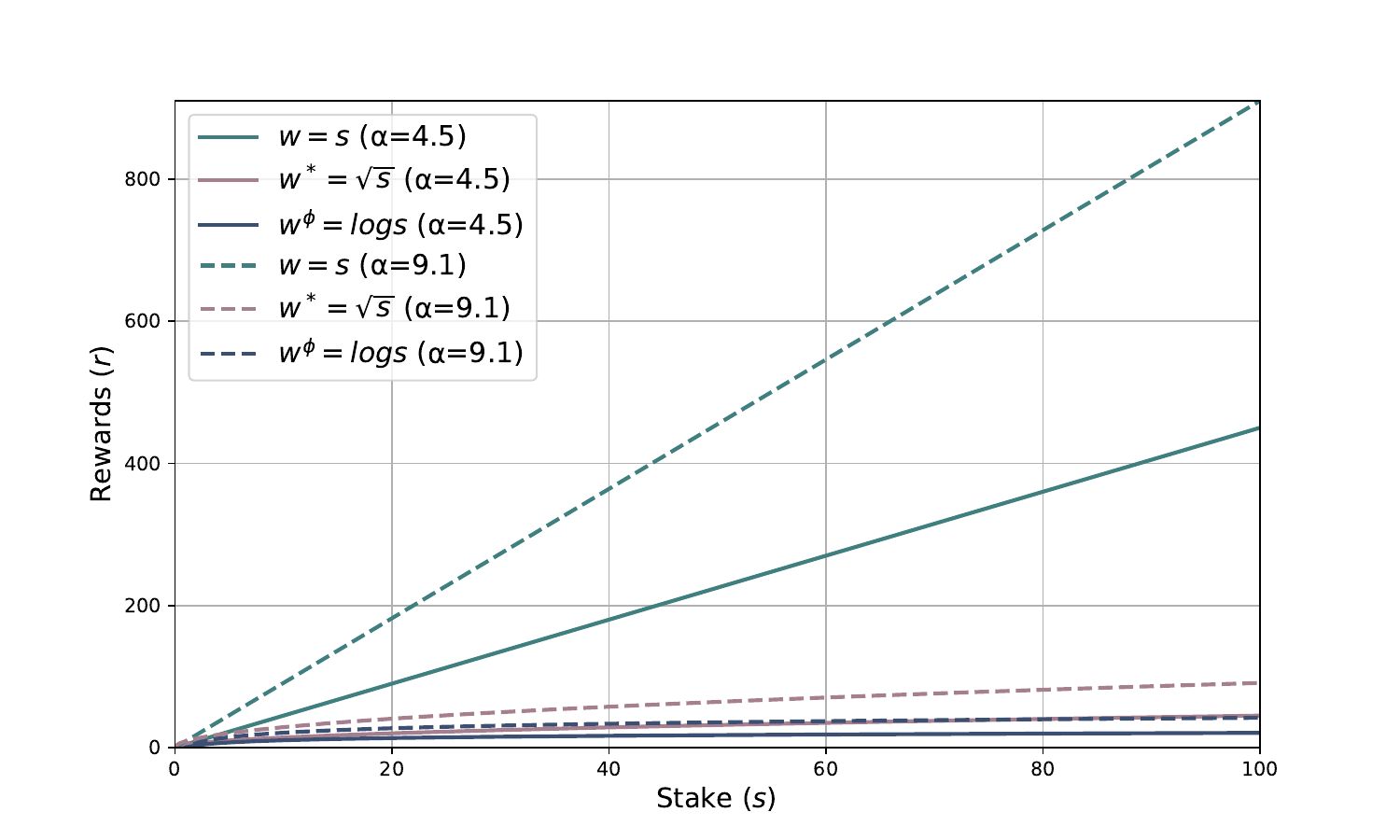}
        \caption{Reward rate with varying inflations}
        \label{fig:reward}
    \end{subfigure}
    \hspace{\fill}
    \begin{subfigure}[t]{0.48\textwidth}
        \centering
        \includegraphics[width=\linewidth, trim= {0cm 0cm 1cm 0cm}, clip]{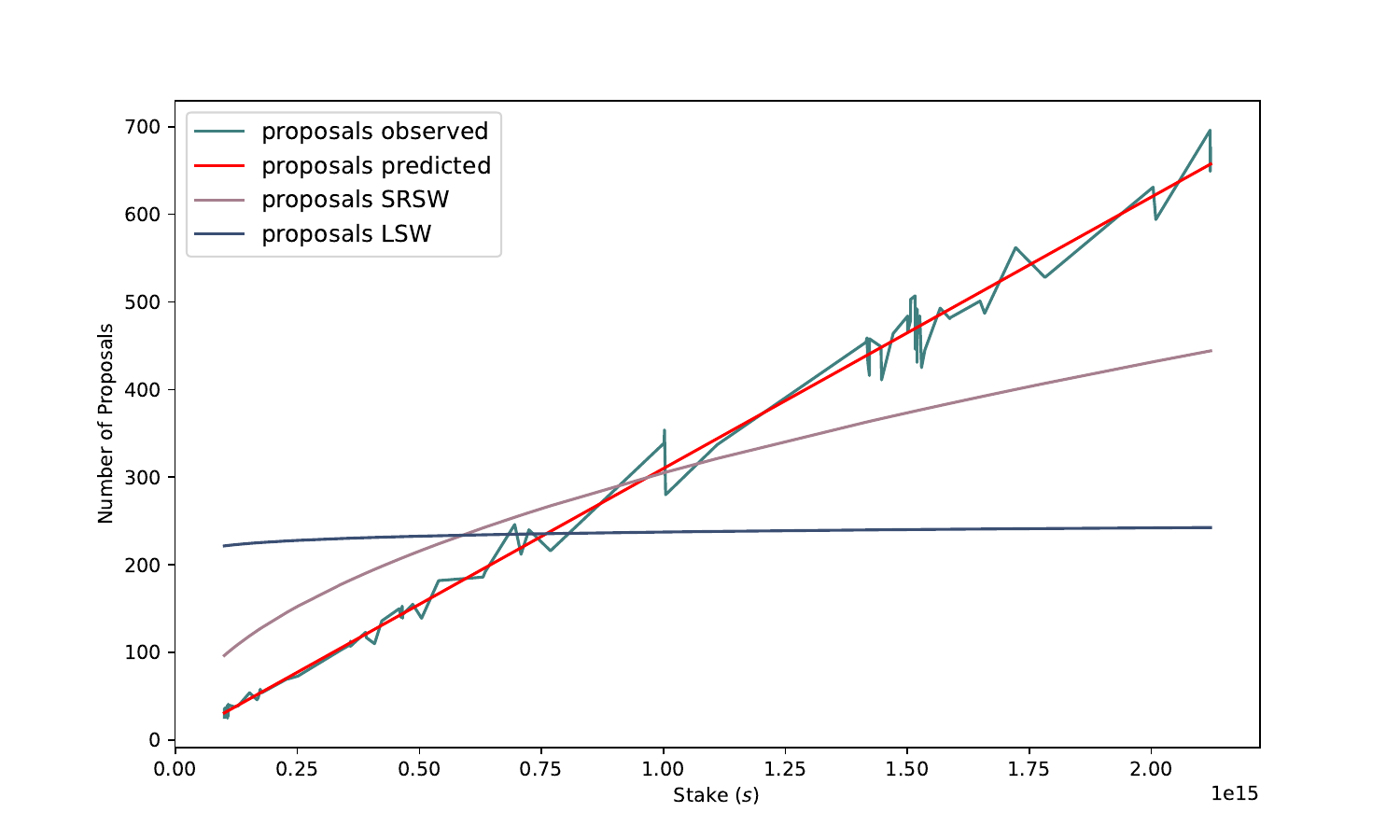}
        \caption{Block proposals made by validators}
        \label{fig:proposals}
    \end{subfigure}
    \caption{Implications of updated weight on block proposals and rewards}
    
    \label{fig:comparison}
\end{figure*}

\subsection{Secondary Effects of Improved Decentralization}
The observed improvements in decentralization metrics with SRSW and LSW models have notable secondary effects, as outlined below:

\begin{itemize}
    \item \textbf{Rewards - Rate of Growth:} 
    In Figure~\ref{fig:reward}, we analyze the growth rate of rewards using two annual inflation rates ($\alpha = \{4.5,9.1\}$), selected based on typical rates in blockchain implementations~\cite{stakingrewards}. This analysis underscores the benefits of the SRSW and LSW models, particularly in moderating reward growth for validators with larger stakes. By addressing the 'rich get richer' narrative, both models promote a fairer reward distribution, leading to more equitable weight compounding across epochs.

    \item \textbf{Block Generation Decentralization:} 
    Figure~\ref{fig:proposals} explores the dynamics of block proposal generation. Utilizing data from the Aptos blockchain~\cite{aptosBlockchain}, we initially demonstrate how the current block proposers are chosen based on linear stake weights. Subsequently, we compare this against the predicted block proposer distribution under the SRSW model. Our findings reveal that the SRSW and LSW models lead to a more diverse set of block proposers, playing a crucial role in mitigating Maximal Extractable Value (MEV) risks~\cite{daian2020flash} and enhancing censorship resistance~\cite{censorshipData}. This diversification in proposers is consistent with decentralization metrics discussed in contemporary studies~\cite{li2020comparison,li2023cross}, highlighting the SRSW and LSW models' contribution to decentralization.
\end{itemize}

\subsection{Comparing SRSW and LSW} 
Both SRSW and LSW outperform the linear stake weight model in decentralization, yet LSW consistently yields superior results. The stronger concavity of the logarithmic function in LSW reduces the influence of high-stake validators more effectively than the square root function in SRSW (see Theorem~\ref{theorem:lsw-superior}). Empirical evaluations show that LSW achieves a 60\% greater average improvement across all decentralization metrics compared to SRSW.

Nevertheless, SRSW has practical advantages, offering reduced computational complexity and serving as an efficient intermediary. Protocols may also consider non-linear alternatives, such as cubic root weighting, to align with specific decentralization goals.

\section{Related Work}
\label{sec:relatedwork}
Decentralization in blockchains, a cornerstone for blockchain efficacy, have been extensively explored, with a focus on governance~\cite{balajidecentralization, fritsch2022analyzing, kiayias2022sok, sharma2023unpacking, tan2023open}. Recent studies, such as those on token-based voting in DeFi protocols~\cite{messias2023understanding}, underscore the evolving complexities in blockchain governance. Particularly, decentralization research in PoS and DPoS systems have illuminated the challenges of weight concentration in these protocols~\cite{lin2021measuring, li2023liquid, liu2022understanding, li2023cross, li2020comparison, li2023hard, liu2022decentralization, kim2019stellar}.

These studies identify and quantify the weight concentration challenge in weighted consensus, yet solutions to this issue remain underexplored. Existing suggestions, such as capping proposals per validator~\cite{jeong2020centralized} and introducing reward sharing in validator staking pools~\cite{brunjes2020reward}, offer only partial remedies. Recent research on introducing virtual stake based on validator performance~\cite{mivsic2023towards} is addressing the challenge but raises potential vulnerabilities such as the `nothing-at-stake' problem~\cite{motepalli2021reward}. Our work diverges by enhancing decentralization directly at the consensus mechanism, without the need for new tokens or introducing associated vulnerabilities, thus improving decentralization more holistically.

\section{Conclusions}
\label{sec:conclusion}
This study introduced the Square Root Stake Weight (SRSW) and Log Stake Weight (LSW) functions to mitigate stake concentration in permissionless blockchains, achieving notable improvements in decentralization metrics, including the Gini index, Zipf's coefficient, and Nakamoto coefficients. While acknowledging the complexities of increasing Sybil resistance, we highlight the need for further investigation into practical implementations and governance frameworks. Future research could explore geospatial weight distributions and auction-based mechanisms for validator selection, offering pathways to enhance decentralization and economic efficiency in blockchain consensus.

\section*{Acknowledgment}
This work has been supported in part by NSERC, ORF, and the Sui Foundation.

\bibliographystyle{plain}
\bibliography{references}


\begin{IEEEbiography}
[{\includegraphics[width=1in,height=1.25in,clip,keepaspectratio]{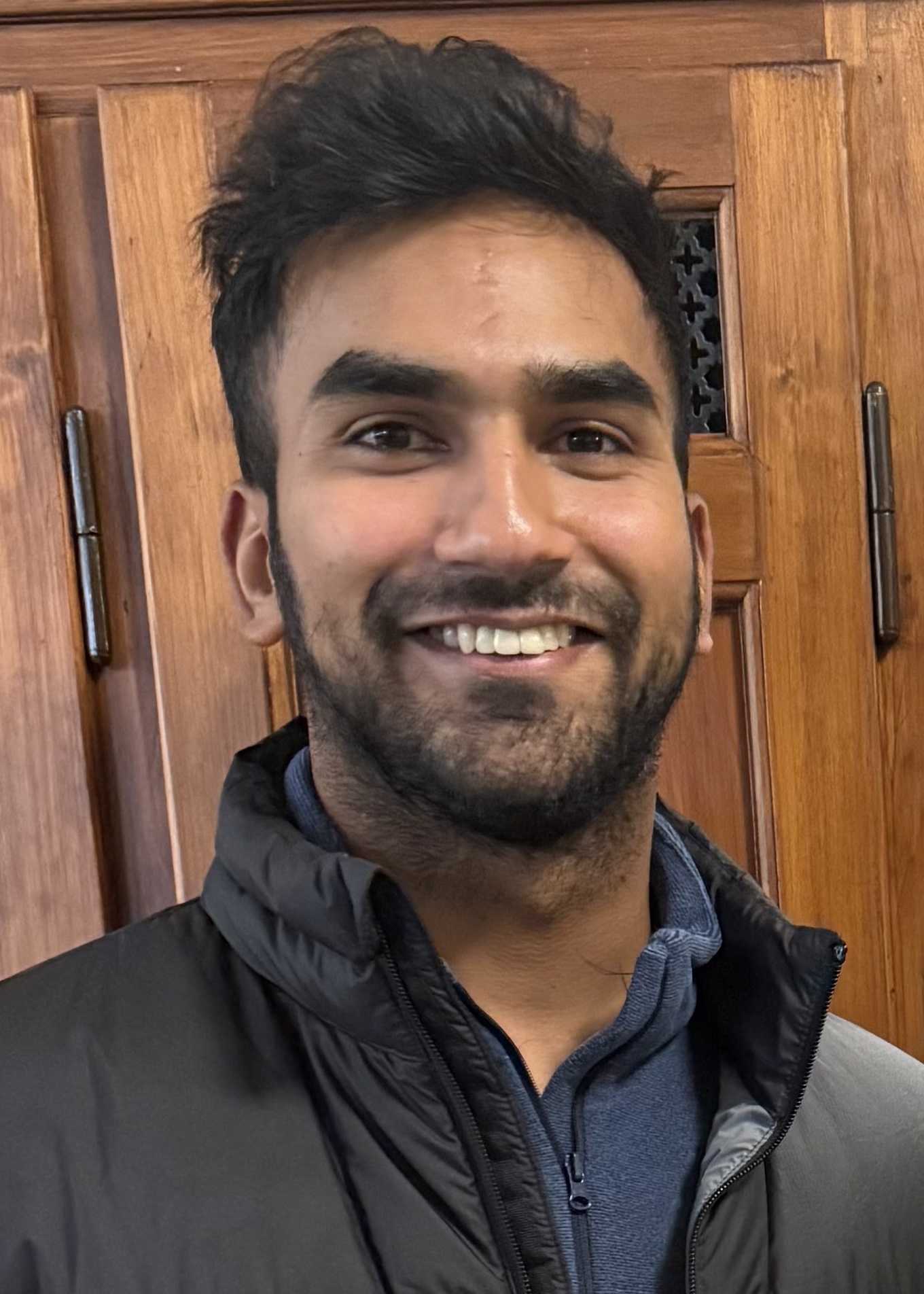}}]{Shashank Motepalli}
Motepalli is a PhD Candidate at the Sr. Rogers Department of Electrical and Computer Engineering, University of Toronto. His research focuses on decentralizing consensus mechanisms in blockchains, particularly mitigating stake concentration in Proof of Stake (PoS) systems and optimizing geospatial validator distribution for enhanced blockchain decentralization. He holds a Master’s degree in Information Technology from IIIT-Bangalore, India.
\end{IEEEbiography}
\vspace{-24pt}
\begin{IEEEbiography}
[{\includegraphics[width=1in,height=1.25in,clip,keepaspectratio]{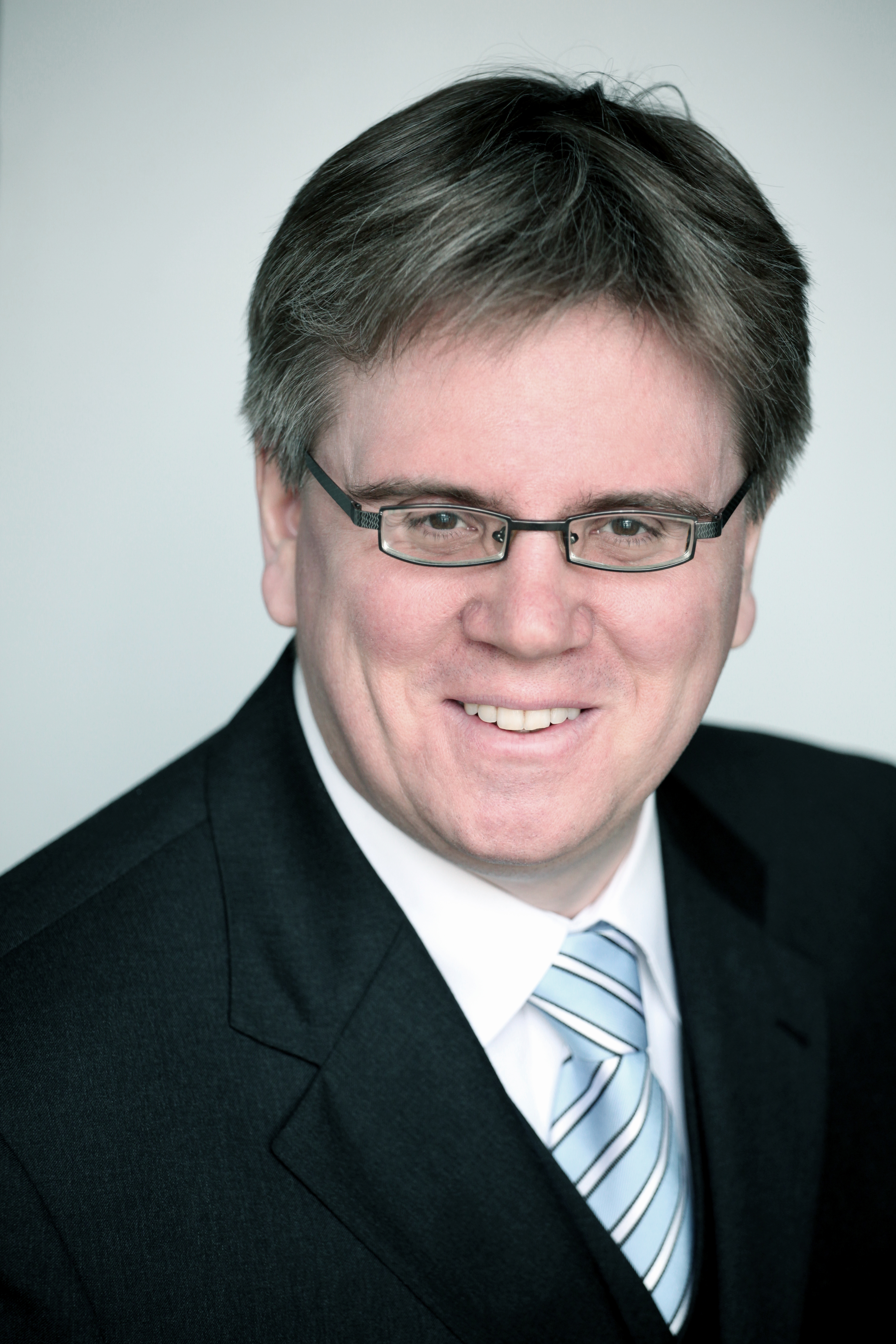}}]{Hans-Arno Jacobsen}
Professor Jacobsen holds the Jeffrey Skoll Chair in Computer Networking and Innovation at the Sr. Rogers Department of Electrical and Computer Engineering, University of Toronto, where he is a professor of Computer Engineering and Computer Science. His pioneering research lies at the intersection of distributed systems, data management and data science, with particular focus on blockchains, (complex) event processing, and cyber-physical systems. Over the past few years, he has increasingly become interested in quantum computing where, to this end, he is working on applications in molecular property prediction (computational chemistry) and quantum machine learning, in the long-term, aiming to endeavor into building distributed quantum computing abstractions. Professor Jacobsen is a Fellow of the IEEE.
\end{IEEEbiography}
\balance
\end{document}